\theoremstyle{plain}
\newtheorem{theorem}{Theorem}[section]
\theoremstyle{definition}
\theoremstyle{remark}
\newcommand{\namedref}[2]{\hyperref[#2]{#1~\ref*{#2}}}
\newcommand{\sectionref}[1]{\namedref{Section}{#1}}
\newcommand{\figureref}[1]{\namedref{Figure}{#1}}
\newcommand{\tableref}[1]{\namedref{Table}{#1}}
\newcommand{\itemref}[1]{\namedref{Item}{#1}}
\newcommand{\algorithmref}[1]{\namedref{Algorithm}{#1}}
\newcommand{\stepref}[1]{\namedref{Step}{#1}}
\newcommand{\textred}[1]{\textcolor{red}{#1}}
  \newcommand{\pgwrapper}[2]{\textred{#1: #2}}
  \newcommand{\pgwrapper}[2]{}
\newcommand{\parhead}[1]{\medskip \noindent {\bfseries\boldmath\ignorespaces #1.}\hskip 0.9em plus 0.3em minus 0.3em}
\newcommand{\CHILD}{\mathsf{C}}
\newcommand{\ch}{\CHILD}
\newcommand{\PARENT}{\mathsf{P}}
\newcommand{\pa}{\PARENT}
\newcommand{\DESTINATION}{\mathsf{D}}
\newcommand{\PARENTSET}{\mathcal{P}}
\newcommand{\REQUIREDSET}{\mathcal{R}}
\newcommand{\NODE}{\mathsf{N}}
\newcommand{\no}{\NODE}
\newcommand{\SOURCE}{\mathsf{S}}
\newcommand{\so}{\SOURCE}
\newcommand{\pk}{\mathsf{pk}}
\newcommand{\sk}{\mathsf{sk}}
\newcommand{\INDEG}[1]{|\REQUIREDSET_{#1}|}
\newcommand{\one}{PIP}
\newcommand{\two}{Log-PIP}
\newcommand{\cert}{\mathsf{cert}}
\newcommand{\pollutionwork}{\cite{KrohnFM04}, \cite{CharlesJL06}, \cite{GkantsidisR06}, \cite{GkantsidisR06}, \cite{ZhaoKMH07}, \cite{YuWRG08}, \cite{HoLKMEK08}, \cite{ JaggiLKHKME08}, \cite{BonehFKW09}, \cite{KosutTT09}, \cite{AgrawalB09},  \cite{DongCN09}, \cite{AgrawalBBF10}, \cite{LeM10}, \cite{YaoSJL10}, and \cite{WanVNK10}}
\newcommand{\pollutiondetection}{\cite{KrohnFM04}, \cite{CharlesJL06}, \cite{GkantsidisR06}, \cite{GkantsidisR06}, \cite{ZhaoKMH07}, \cite{YuWRG08}, \cite{BonehFKW09}, \cite{DongCN09}, \cite{AgrawalBBF10}, and \cite{WanVNK10}}
\newcommand{\pollutioncrypto}{\cite{BonehFKW09}, \cite{ZhaoKMH07}, \cite{KrohnFM04}, \cite{YuWRG08}, \cite{CharlesJL06}, \cite{AgrawalB09}, and \cite{AgrawalBBF10}}
\newcommand{\pollutionsignatures}{\cite{BonehFKW09}, \cite{ZhaoKMH07}, \cite{KrohnFM04}, \cite{YuWRG08}, \cite{CharlesJL06}, and \cite{AgrawalBBF10}}
\newcommand{\gen}{\mathsf{gen}}
\newcommand{\sig}{\mathsf{sig}}
\newcommand{\ver}{\mathsf{ver}}
\newcommand{\checkh}{\textsc{CheckHelper}}
\newcommand{\vt}{\textsc{VerifTest}}
\newcommand{\combine}{\textsc{Combine}}
\newcommand{\oneabs}{3.7}
\newcommand{\twoabs}{1.4}
\newcommand{\onerel}{2}
\newcommand{\tworel}{0.5}
\title{Going Beyond Pollution Attacks: \\ Forcing Byzantine Clients to Code Correctly}
\date{July 29, 2011}
\author{
Raluca Ada Popa\thanks{Email: \href{mailto:ralucap@mit.edu}{raluca@csail.mit.edu}.} \\ MIT CSAIL
\and
Alessandro Chiesa\thanks{Email: \href{mailto:alexch@mit.edu}{alexch@csail.mit.edu}.} \\ MIT CSAIL
\and
Tural Badirkhanli\thanks{Email: \href{mailto:turalb@mit.edu}{turalb@csail.mit.edu}.} \\ MIT CSAIL
\and
Muriel M\'{e}dard\thanks{Email: \href{mailto:medard@mit.edu}{medard@mit.edu}.} \\ MIT RLE
}
\begin{document}

\maketitle

\begin{abstract}

Network coding achieves optimal throughput in multicast networks. However, throughput optimality \emph{relies} on the network nodes or routers to code \emph{correctly}. A Byzantine node may introduce junk packets in the network (thus polluting downstream packets and causing the sinks to receive the wrong data) or may choose coding coefficients in a way that significantly reduces the throughput of the network.

Most prior work focused on the problem of Byzantine nodes polluting packets. However, even if a Byzantine node does not pollute packets, he can still affect significantly the throughput of the network by not coding correctly. No previous work attempted to verify if a certain node \emph{coded correctly using random coefficients} over \emph{all} of the packets he was supposed to code over.

We provide two novel protocols (which we call \one{} and \two{}) for detecting whether a node coded correctly over all the packets received (i.e., according to a random linear network coding algorithm). Our protocols enable any node in the network to examine a packet received from another node by running a ``verification test''. With our protocols, the worst an adversary can do and still pass the packet verification test is in fact equivalent to random linear network coding, which has been shown to be optimal in multicast networks. Our protocols resist collusion among nodes and are applicable to a variety of settings.

Our topology simulations show that the throughput in the worst case for our protocol is two to three times larger than the throughput in various adversarial strategies allowed by prior work. We implemented our protocols in C/C++ and Java, as well as incorporated them on the Android platform (Nexus One). Our evaluation shows that our protocols impose modest overhead. 

\end{abstract}

\newpage
\tableofcontents
\newpage

\section{Introduction}
\label{sec:introduction}

Network coding was first proposed by Ahlswede et al. \cite{AhlswedeCLY00}, who demonstrated that, for certain networks, network coding can produce a higher throughput than the best routing strategy. A subsequent line of work that includes the works of Koetter et al. \cite{KoetterM03}, Li et al. \cite{LiYC03}, and Jaggi et al. \cite{JaggiSCEEJT05} showed that random linear coding reaches maximum throughput for multicast networks. Overall, network coding has proved better than routing for both wired and wireless networks and for both multicast and broadcast \cite{NarmawalaS08}; it has also found applications to increasing the robustness and throughput of peer-to-peer networks (e.g., \cite{GkantsidisR05}) and to a variety of sensor wireless networks as surveyed by Narmawala and Srivastava \cite{NarmawalaS08}.

\parhead{Throughput optimality requires diversity}
The throughput guarantees of network coding, however, \emph{rely on the assumption that all the nodes in the network code correctly}, i.e., each node in the network, when receiving packets, is assumed to transmit a packet that is a random linear combination of the incoming packets; informally, packets that are indeed linear combinations of the incoming packets are said to be \emph{valid}, and packets that are \emph{random} linear combinations of the incoming packets are said to be \emph{diverse}.

The assumption that each node in the network codes correctly may not hold because the network may contain \emph{Byzantine nodes}, who are malicious or faulty nodes.
For example, a Byzantine node may change the payload or the coding vector in a way that is not a linear combination of the received packets, thereby transmitting an \emph{invalid} (or \emph{polluted}) packet. The invalid packet will mix with other packets and thus pollute more packets, ultimately causing the decoded information at the sinks to be incorrect.

In fact, a Byzantine node can transmit a valid packet (i.e., a linear combination of the received packets), but still manage to decrease the overall throughput at the sinks. The Byzantine node could choose coefficients for the linear combination in a way that is \emph{not random}: the node could forward one of the packets (by simply routing), code over only a subset of the packets, or, even worse, choose coefficients that do not contribute any new information to his receivers, thus, \emph{effectively sending nothing}. While the network is not polluted by such a Byzantine node (and the decoded information at the sinks is still valid), the throughput of the network is decreased. In \sectionref{sec:evaluation}, as an example, we show that such Byzantine nodes can indeed reduce the throughput to \emph{as much as a half or a third} in some specific cases and as much as $20\%$ on random topologies. \figureref{fig:butterfly} shows a simple example of $50\%$  throughput reduction on the standard butterfly topology.

\begin{figure}[h!]
\centering
\subfloat[]{\includegraphics[scale=0.60]{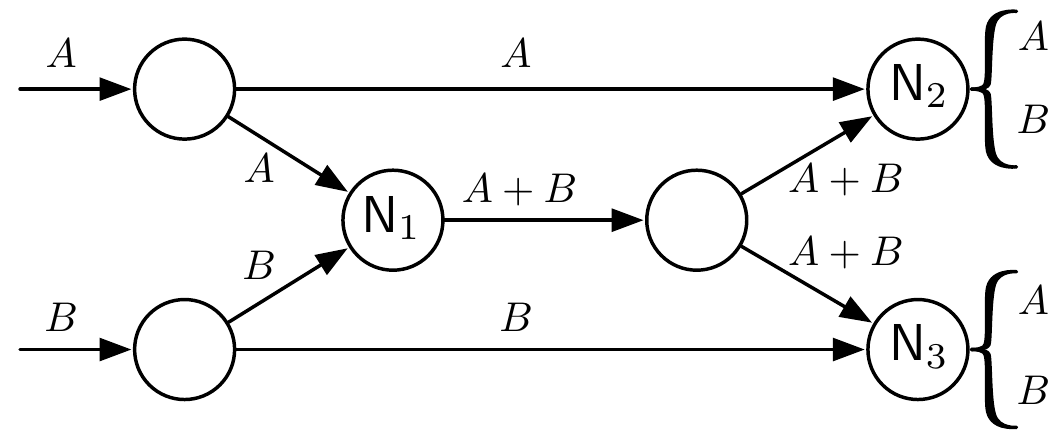}\label{fig:butterfly-a}}
\qquad
\subfloat[]{\includegraphics[scale=0.60]{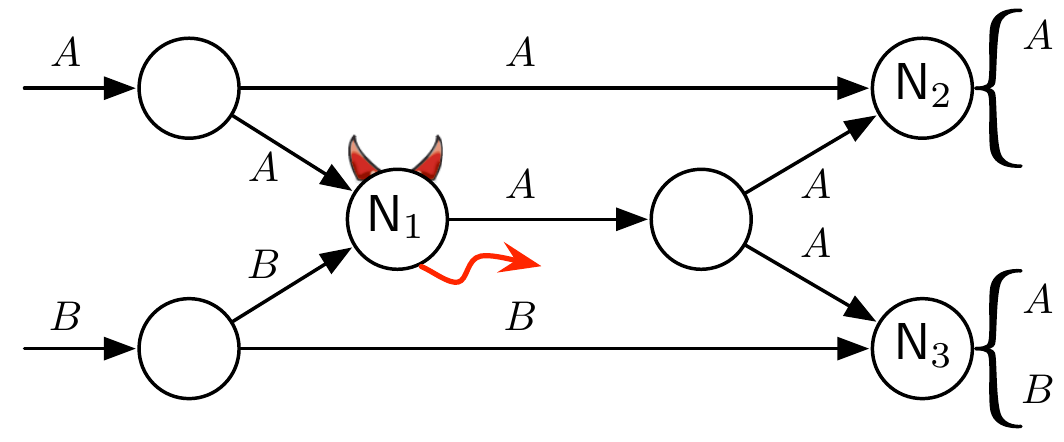}\label{fig:butterfly-b}}
\captionsetup{width=0.85\textwidth}
\caption{Example of throughput reduction caused by a Byzantine node on a butterfly network: (a) if node $\NODE_{1}$ is honest, he will send $A+B$, thus allowing both $\NODE_{2}$ and $\NODE_{3}$ to recover both $A$ and $B$; (b) if node $\NODE_{1}$ is Byzantine, he may choose to send only $A$, thus halving the throughput at $\NODE_{2}$, which can now recover only $A$.}
\label{fig:butterfly}
\end{figure}

\parhead{Insufficiency of prior work to guarantee correctness}
A significant body of previous work that includes \pollutionwork{} addressed the problem of defending against pollution attacks, where the goal is to enforce or check that the packets sent by each node to be \emph{some} (not necessarily random) linear combination of the packets sent by the source. Most prior work on enforcing validity of packets has focused on detecting polluted packets right at the point where a Byzantine node injected them into the network \pollutionsignatures: when a Byzantine node injects an invalid packet into the network, the node receiving the packet is able to detect if the packet is invalid by running a test, and can discard the invalid packet right away. 

However, all such work does not detect Byzantine nodes that deviate from \emph{random} linear coding of the received packets, thus allowing such Byzantine nodes to reduce throughput as already discussed above. In particular, Byzantine nodes are still allowed to simply forward a received packet (rather than to code over multiple packets) or use coefficients that provide no new degrees of freedom to downstream nodes, effectively sending no data.

\parhead{Our result}
Given that Byzantine nodes may significantly affect the throughput of the network, we believe that it is important to study the following problem:
\begin{center}
How to force a node to code \emph{correctly}? \\
(That is, to code both \emph{validly} and \emph{randomly}, over \emph{all} the received packets.)
\end{center}
Our main contribution is a novel protocol for enabling each child of a node to detect whether the node coded correctly over all the packets he was supposed to code over (i.e., according to a random linear network coding algorithm).  In our protocol, a child node $\CHILD$ of a node $\NODE$ (where \emph{child} means that $\CHILD$ receives data from his \emph{parent} $\NODE$) can check, by running a \emph{verification test}, that the data from $\NODE$ is the result of correctly coding over the packets $\NODE$ receives from his parents. The node $\CHILD$ need only examine the packet received from $\NODE$ and does not need to know the precise packet payloads used in coding at $\NODE$. 

Let the \textit{required set} of $\NODE$, denoted $\REQUIREDSET_{\NODE}$, be the subset of the parents of $\NODE$ that $\NODE$ is expected to code over. As we will discuss in \sectionref{sec:applications}, the exact definition of the required set depends on the application; the flexibility in defining it will enable our protocols to be applicable to a variety of settings. For example, some applications may require a node to code over the packets from all his parents; other applications, perhaps due to unreliability of the communication channel, may require nodes to code over at least some minimum number of parents. 

Using our protocols presented in \sectionref{sec:protocol}, the child node $\CHILD$ can ensure that:
\begin{itemize}
  \item[(i)] the packet from $\NODE$ is the result of coding over the packets from all the nodes in $\REQUIREDSET_{\NODE}$, and
  \item[(ii)] the coding coefficients used by $\NODE$ are pseudorandom.
\end{itemize}
We provide two algorithms, with two different kinds of guarantees: Payload-Independent-Protocol (\one{}) and Log-Verification \one{} (\two{}). \one{} always detects if $\NODE$ failed to code over all the packets from parents in the required set, whereas \two{} detects such a violation with an adjustable probability. In cases where nodes can have many parents (say, more than $10$), \two{} is faster and more bandwidth efficient. While we use \textit{pseudorandom} coefficients instead of random ones, this does not affect the throughput guarantees of network coding (see \sectionref{sec:force-diverse}); accordingly, we will use the two terms interchangeably in this paper. 

Furthermore, our protocols are \emph{resistant to collusion among nodes}: even if the two Byzantine nodes $\NODE$ and $\CHILD$ collude, the other honest children of $\NODE$ can still check whether $\NODE$ coded correctly over any non-colluding parents. 

Finally, we assume that there exist penalties for nodes that are found to send incorrect packets, and we assume that they drive incentives against cheating in a detectable manner. A discussion of the exact form of such penalties  of course lies outside of the scope of this paper, and one should choose the penalty that is best fit for one's application. To facilitate the use of a penalty system, though, our protocol enables nodes to \emph{prove} (and not only detect) that a parent cheated (i.e., did not code correctly); moreover, Byzantine nodes cannot falsely accuse honest nodes of not coding correctly.

Thus, we assume that Byzantine nodes will not cheat in a detectable way. We therefore consider an adversarial model in which \emph{Byzantine nodes perform the worst possible action to pass the verification test}. In \sectionref{sec:proofs}, we prove that the worst an adversary can do and still pass our packet verification tests is \emph{to code correctly} (i.e., according to a random linear network coding scheme), which has been shown to give optimal throughput in multicast networks. 

\parhead{Implementation and evaluation}
Our simulations in \sectionref{sec:evaluation} show that the throughput in the best adversarial strategy for our protocol is two to three times larger than the throughput in several adversarial strategies allowed by prior work. 

We implemented our protocols in C/C++. We also wrote a Java implementation for Java-based P2P applications and an Android package for smartphone P2P file sharing. Our C/C++ evaluations show that the protocols are reasonably efficient: the running time at a node to prepare for transmitting the data is less than $0.3$ ms, and the time to perform a verification test is $\oneabs$ ms with \one{} and $\twoabs$ ms with \two{}. Compared to the overhead introduced by a pollution detection scheme that we analyzed \cite{BonehFKW09}, the additional overheads introduced by our two protocols are respectively less than $\onerel\%$ for \one{} and less than $\tworel\%$ for \two{}. This suggests that, if one is already using a pollution detection scheme, then additionally enforcing diversity of packets will not affect performance by much. Moreover, the overhead of both of our protocols is independent of how large the packet payload is.

\section{Related Work}
\label{sec:related-work}

Ahlswede et al. \cite{AhlswedeCLY00} have pioneered the field of network coding. They showed the value of coding at routers and provided theoretical bounds on the capacity of such networks. Works such as those of Koetter et al. \cite{KoetterM03}, Li et al. \cite{LiYC03}, and Jaggi et al. \cite{JaggiSCEEJT05} show that, for multicast traffic, linear codes achieve maximum throughput, while coding and decoding can be done in polynomial time. Ho et al. \cite{HoKMKE03} show that random network coding can also achieve maximum network capacity. Network coding has been shown to improve throughput in a variety of networks: wireless \cite{LunMK05}, peer-to-peer content distribution \cite{GkantsidisR05}, energy \cite{WieselthierNE00}, distributed storage \cite{Jia06}, and others.

Despite its throughput benefits, however, network coding is susceptible to Byzantine attacks. A Byzantine node can inject into the network junk packets, which will mix with correct packets and generate more junk packets, thus resulting in junk data at the sink. 

A significant amount of research aims to prevent against or recover from pollution attacks \pollutionwork.
Ho et al. \cite{HoLKMEK08} attempt to detect at the sinks if the packets have been modified by a Byzantine node.  They do so by adding hash symbols that are obtained as a polynomial function of the data symbols, and pollution is indicated by an inconsistency between the packets and the hashes. 

Jaggi et al. \cite{JaggiLKHKME08}, for example, discuss rate-optimal protocols that survive Byzantine attacks. Their idea is to append extra parity information to the source messages. 
Kosut et al.~\cite{KosutTT09} provide non-linear protocols for achieving capacity in the presence of Byzantine adversaries. 





There has also been important work in the problem of detecting polluted packets when they are injected, see for example \pollutiondetection.
These schemes are helpful because they prevent polluted packets from mixing with other packets. The most common approach has been the use of a homomorphic cryptographic scheme (such as signature) \pollutioncrypto.
In a peer-to-peer setting, Krohn et al. \cite{KrohnFM04} propose a scheme based on homomorphic hashes to detect on the fly whether a received packet is valid. The homomorphic hashes are used to verify if the check blocks of downloaded files are indeed a linear combinations of the original file blocks.
 Gkantsidis and Rodriguez \cite{GkantsidisR06} further extend the approach of Krohn et al. to resist pollution attacks in peer-to-peer file distribution systems that use network coding. 
They also mention the entropy attack, which is similar to our diversity attack. However, they do not solve the problem of enforcing a Byzantine client to code diversely. Their approach is to have a node download coding coefficients from neighbors and decide from which neighbors to download the data to get the most innovative packets. However, a Byzantine client can still not code diversely, and for example, can choose not to code over the data from a parent that he knows would provide innovative information to his neighbors, thus reducing overall throughput.

Wan et. al~\cite{WanVNK10} propose limiting pollution attacks by identifying the malicious nodes, so that they can be isolated, and Le and Markopoulou~\cite{LeM10} by identifying the precise location of Byzantine attackers using a homomorphic MAC scheme. 

Zhao et al. \cite{ZhaoKMH07} provides a signatures scheme for content distribution with network coding based on linear algebra and cryptography. The source provides all nodes with an invariant vector and public key information. With that information, all nodes can check on the fly the validity of a packet. \cite{YuWRG08} provides  homormorphic signatures schemes for preventing such Byzantine attacks, but the paper is vacuous due to a flaw. \cite{CharlesJL06} and \cite{BonehFKW09} also provide homomorphic signatures schemes, with a construction based on elliptic curves. This scheme augments the packet size by only one constant of about $1024$ bits. 

Another recent approach to detecting polluted packets is the algebraic watchdog~\cite{KimMB10, LiangAV10} in which nodes sniff on packets from other nodes and try to establish if they are polluted. 

However, while all these schemes only check if a packet is valid, they \emph{cannot establish if a packet is diverse}. If Byzantine nodes are prevented from sending junk packets, because there are packet validity checks, it is still the case that there are other ways in which a Byzantine node can affect the throughput without violating any validity checks. For example, a Byzantine node can simply not send any data, he can forward one of the received packets (without coding), he can code with fixed coefficients, or he can choose coefficients that minimize the network throughput. In \sectionref{sec:evaluation}, we show that Byzantine behavior of this kind does indeed significantly decrease throughput. All these behaviors are not considered (and not prevented) by all previous work on pollution attacks.

\section{Model}
\label{sec:model}

We present the network model and then formulate the security problem that we want to solve. In \sectionref{sec:applications}, we explain how our model and protocols apply to a variety of problem domains.

\subsection{Network Model}
\label{sec:network-model}

We consider a network where nodes perform random linear network coding \cite{HoKMKE03} over some finite field. Roughly, each packet is a pair consisting of a \emph{payload} $M$ and a \emph{coding vector} $C$; nodes ``code'' by choosing random coefficients and using them to compute linear combinations of the received packets. For example: node $\NODE$ receives two packets $\langle M_{1}, C_{1} \rangle$ and $\langle M_{2}, C_{2} \rangle$; to random linear network code these packets, $\NODE$ chooses two random coefficients $\alpha_{1}$ and $\alpha_{2}$ from a certain finite field and computes the resulting coded packet as $\langle \alpha_{1} M_{1} + \alpha_{2} M_{2}, \alpha_{1} C_{1} + \alpha_{2} C_{2} \rangle$, where the computations are also performed in the finite field. In \sectionref{sec:notation}, we provide more details about the structure of a packet.

The network is modeled as a directed graph in the natural way: each node in the network corresponds to a vertex in the graph, and if a node $\NODE$ sends data to another node $\NODE'$ then there is a directed edge in the graph from the vertex (corresponding to the node) $\NODE$ to the vertex (corresponding to the node) $\NODE'$; we then say that $\NODE$ is a \emph{parent} of $\NODE'$ and that $\NODE'$ is a \emph{child} of $\NODE$; similarly, if there is a directed edge from $\NODE'$ to $\NODE''$, we say that $\NODE$ is a \emph{grandparent} of $\NODE''$ and $\NODE''$ is a \emph{grandchild} of $\NODE$. Each node sends one packet per time period to each of his children.

We always denote a generic node in the network by $\NODE$; he has parents denoted by $\PARENT_{i}$ and children denoted by $\CHILD_{j}$. We denote by $\PARENTSET_{\NODE}$ the set of parents of $\NODE$. As discussed, the \emph{required set} of $\NODE$, denoted by $\REQUIREDSET_{\NODE}$, is the subset of $\PARENTSET_{\NODE}$ indicating which parents the node $\NODE$ should code over. Ideally, the required set would be equal to the parent set, but this may not be possible in all settings or applications. (See \sectionref{sec:applications} where we discuss various choices of the required set.) See \figureref{fig:network} for a diagram of a network using our notation.

\begin{figure}[h!]
\centering
\includegraphics[scale=0.46]{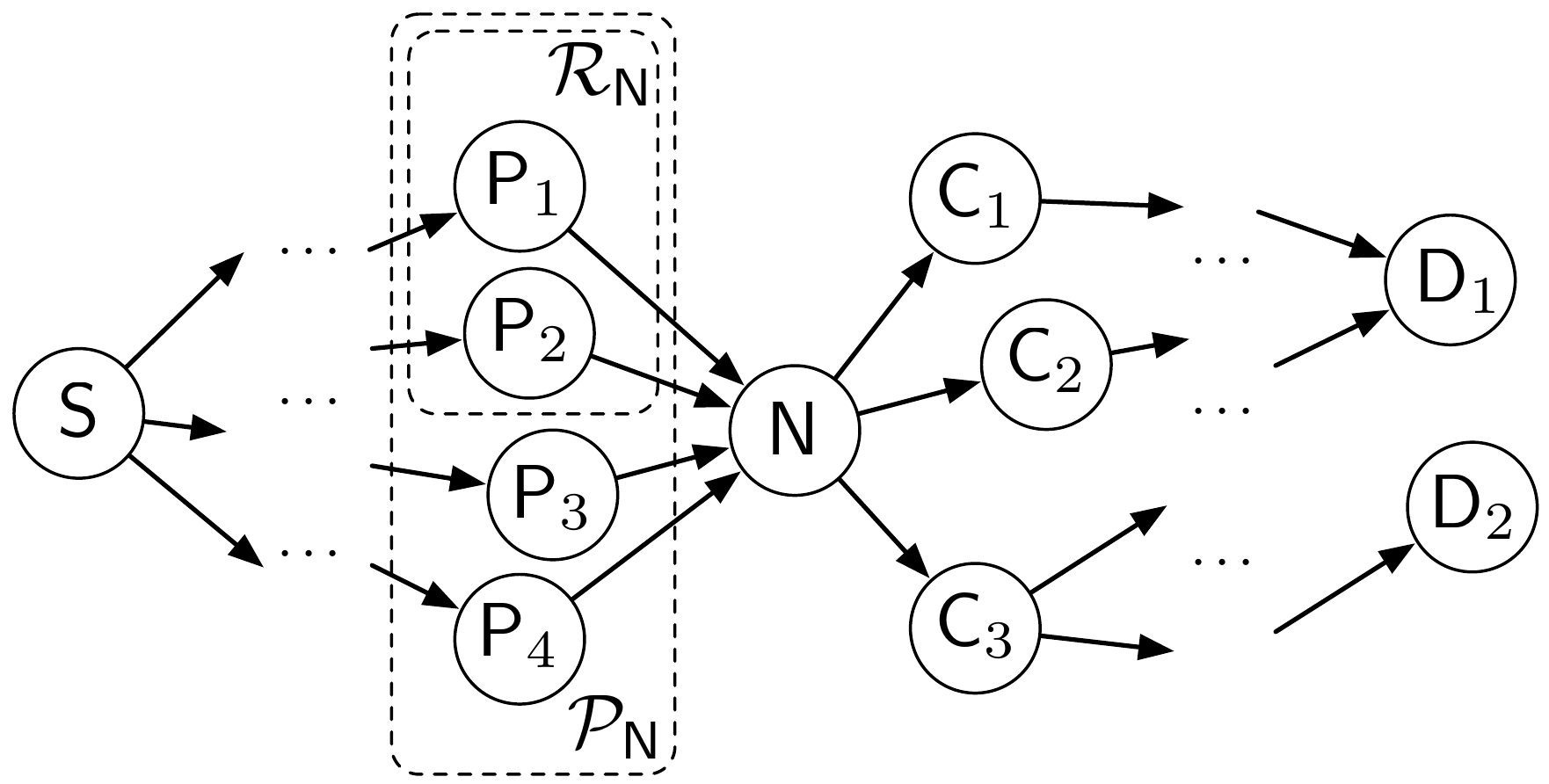}
\captionsetup{width=0.85\textwidth}
\caption{A source node $\SOURCE$ sends data to two destination nodes $\DESTINATION_{1}$ and $\DESTINATION_{2}$. A generic node $\NODE$ somewhere in the graph has parent nodes $\PARENT_{1}$, $\PARENT_{2}$, $\PARENT_{3}$, and $\PARENT_{4}$ (of which $\PARENT_{1}$ and $\PARENT_{2}$ form his required set $\REQUIREDSET_{\NODE}$) and has children nodes $\CHILD_{1}$, $\CHILD_{2}$, and $\CHILD_{3}$.}
\label{fig:network}
\end{figure}

Each node $\NODE$ has a public key $\pk_{\NODE}$ and a corresponding secret key $\sk_{\NODE}$. We assume that each node knows the public key $\pk_{\SOURCE}$ of the source $\SOURCE$; this is a reasonable assumption present in most previous work on pollution attacks \pollutionsignatures{}; for example, a node may be given this public key upon entering the system. 

In some settings (\sectionref{sec:applications}), we will need each node $\NODE$ to have a certificate $\cert_\NODE$ that his public key is valid and belongs to it; $\cert_\NODE$ consists of a signature from the source or some other trusted party: $\sig(\pk_{\NODE}, \text{``this is the public key of }N\text{''})$. A node need only obtain such a signature once per lifetime of the node and it can be performed, for example, when the node joins the network.

In order for a child $\CHILD_{j}$ to check that his parent coded correctly using the protocols that we present in \sectionref{sec:protocol},  $\CHILD_{j}$ needs to know what is the required set $\REQUIREDSET_{\NODE}$ of $\NODE$ and what are the public keys of the nodes in this set. Nodes do not need to know the required set (or the set of grandparents) for their parents a priori; in fact, dynamically adjusting the required set is important for dynamic networks. In \sectionref{sec:required-set}, we explain how nodes can acquire the required set for each of their parents depending on the application. We also explain for which applications our protocols are most fit and for which they are not fit.  For now, assume that each nodes knows precisely the nodes in the required set of each parent.

\subsection{Threat Model}
\label{sec:threat-model}

Nodes in the network may be \emph{Byzantine} (i.e., malicious or faulty): a node can pollute the data coming from the source by sending out a packet that is invalid or decrease the throughput by sending a packet that is not a result of coding over packets received from each parent in the required set. In \sectionref{sec:implementation-and-evaluation}, we discuss several Byzantine behaviors and how they affect the throughput of the network.

Even worse, Byzantine nodes can collude among each other. A node can collude with his parents, children or any other node in the network to pass the verification tests at his honest children.

We consider the adversarial model in which Byzantine nodes will use the best adversarial strategy to decrease the throughput at the sinks \textit{while still passing our verification tests}. As already discussed, we assume that there exist penalties in place that create enough incentives for \emph{not cheating detectably}; a discussion of what these penalties should be (e.g., a fine, an investigation, removal from the system, resource choking, reputation decrease, or making topology adjustments) is out of the scope of this paper and one should choose what best fits one's application.

\subsection{Solution Approach and Goals} 
\label{sec:our-solution-approach}

Similarly to prior work on pollution signatures, we also take a ``verification test'' solution approach. Our technical goal is to design a protocol that provably implements such a test for correctness:
\begin{center}
\begin{minipage}{0.85\textwidth}
\parhead{Verification test by node $\ch_j$ when receiving packet $P$ from node $\no$}
\emph{A procedure run by child $\ch_j$ upon receiving a packet $P$ from parent $\no$ to verify that node $\no$ generated $P$ by coding correctly (i.e., using pseudorandom coefficients over a packet from each parent in the required set $\REQUIREDSET_{\NODE}$ of $\NODE$).}
\end{minipage}
\end{center}
If a Byzantine node $\NODE$ passes the verification test performed by an honest child $\CHILD_{j}$, the Byzantine node must have coded correctly over the required data. Therefore, such a verification test would achieve the goal of this paper, because each honest node in the network has the ability to enforce correct random linear network coding at each of his parents.

Specifically, the verification test should satisfy the following properties:
\begin{enumerate}
  \item A Byzantine node that does not follow the random linear coding algorithm should be detected with overwhelming probability.
  \item The test must be efficient with respect to computation and bandwidth. 
  \item The verification test must be collusion resistant: an honest child should be able to check if his parent coded over all the \textit{honest} nodes in his required set, regardless of whether other children or grandparents are Byzantine or not.
  \item If the verification test fails, it is possible to prove it. In particular, this implies that a node can, not only detect, but also prove, when a parent cheats.
\end{enumerate}
We require that the computational overhead that each node incurs by running the verification test is reasonable and, moreover, we also require that the increase in packet size (due to the extra information sent to later nodes in order to enable them to run the verification test) does \emph{not depend on the payload} of the packet. (Recall that network coding is particularly useful when the packet payload is large and the overhead of the coefficients becomes negligible.) 

The protocols we propose (and which are presented in \sectionref{sec:protocol}) achieve the above four properties.

We remark that tackling collusion is challenging. For example, a node $\NODE$ could collude with a child $\CHILD_j$: $\NODE$ could send a packet to $\CHILD_{j}$ that is not the result of coding over all the nodes in the required set with pseudorandom coefficients, and $\CHILD_{j}$ would simply neglect running the verification test on $\NODE$. Still, we want to ensure that the \emph{other, honest} children of $\NODE$ can verify that they do receive correctly-coded packets. This means that each child node $\CHILD_{j}$ must be able to independently check $\NODE$ and not rely on any shared information that is required to stay secret. Similarly, ideally, if some parents collude with $\NODE$, $\NODE$'s children should still be able to check that $\NODE$ coded over all the required parents that did not collude with $\NODE$. This means that the parents cannot have some secret shared data in the protocol, all of this making the cryptographic protocol more challenging.

Finally, while the network model that we adopt is simple, we show in \sectionref{sec:applications} that it is \emph{expressive}: there we explain how to use this model for a variety of network settings and applications, either directly or with simple extensions.

\section{Protocol}
\label{sec:protocol}

We describe the protocols a node needs to run to perform the verification test on each of his parents and to assemble packets to send to his children. For clarity, we present the protocols in an incremental fashion, by successively adding more security properties. But first we will need to introduce some basic notation and cryptographic tools that we use.


\subsection{Notation}
\label{sec:notation}

A sequence (or tuple) of $n$ components $x_{1},\ldots,x_{n}$ is denoted by $(x_{1},\ldots,x_{n})$ or $(x_{i})_{i=1}^{n}$; for simplicity, sometimes we omit the starting and ending indices of the sequence, thus only writing $(x_i)_i$. The concatenation of two strings $a$ and $b$ is denoted by $a || b$.

We denote by $\INDEG{\NODE}$ the number of (parent) nodes in the required set $\REQUIREDSET_{\NODE}$ of node $\NODE$; by $\pk_{\NODE}$ and $\sk_{\NODE}$ the public and secret keys of node $\NODE$; and by $\sig_{N}(x)$ a signature of a message $x$ with respect to the key pair $(\pk_{\NODE},\sk_{\NODE})$ of $\NODE$, where the underlying signature scheme is assumed to satisfy the usual notion of unforgeability (i.e., existential unforgeability under chosen-message attack). For concreteness, we use the DSA algorithm \cite{FIPS-186-3}, whose signatures are only $320$ bits long.

Let $q$ be the prime number used in any of the pollution signature schemes in \pollutionsignatures. For example, in \cite{BonehFKW09}, $q$ is a $160$-bit prime. 

In network coding, as already mentioned, a packet has the form $E = \langle M,C \rangle$,  where $M$ is the \emph{payload} and $C$ the \emph{coding vector}. (In our protocols, we will augment the packet with additional tokens.) The payload $M$ is an $n$-tuple $(m_1,\ldots,m_n)$ of \emph{chunks}, where each chunk $m_{i}$ is an element of $\mathbb{Z}_{q}^{*}$, the multiplicative group of integers modulo the prime $q$. A coding vector $C$ is an $m$-tuple $(c_{1},\ldots,c_{m})$ of chunks, where each chunk is also an element of $\mathbb{Z}_{q}^{*}$.  Hence, $E$ consists of $n+m$ chunks $e_{1}, \ldots, e_{n+m}$, where $e_{i} = m_{i}$ for  $1 \leq i \leq n$ and $e_{i} = c_{i-n}$ for $n<i \leq n+m$. In particular, we can think of $M$, $C$, and $E$ as vectors in some product space of $\mathbb{Z}_{q}^{*}$.


\subsection{Cryptographic tools}
\label{sec:cryptographic-tools}

We now briefly review the cryptographic tools that we employ in our protocols:

\parhead{Pseudorandom functions}
Informally, a \emph{pseudorandom function family} is a family of polynomial-time computable functions $\{F_{s} \colon \{0,1\}^{|s|} \to \{0,1\}^{|s|}\}_{s \in \{0,1\}^{*}}$ with the property that, for a sufficiently large security parameter $k$ and a random $k$-bit seed $s$, $F_{s}$ ``looks like'' a random function to any efficient procedure.
See \cite{GoldreichGM86} for more details.

\parhead{Merkle hashes}
A \emph{Merkle hash}~\cite{Mer89} is a concise commitment to $n$ elements. Suppose that Alice has $n$ elements and she gives Bob a Merkle hash of them. Later, when Bob asks to see some elements from Alice, the Merkle hash allows Bob to check that indeed the elements Alice gives him are the same elements over which she had computed the Merkle hash. Loosely, to compute the Merkle hash of $n$ elements, Alice places the elements at the leaves of a full binary tree; she recursively computes each node higher in the tree as the hash of the concatenation of the two children. The resulting hash at the root is called the \emph{Merkle hash/commitment} of the $n$ elements. Given $n$ elements and their Merkle hash, Alice can reveal an element, say element $i$, to Bob by revealing the label of every node (and his sibling) along the path from the leaf node containing element $i$ to the root; Bob verifies the correctness of element $i$  by re-hashing the elements bottom-up and then verifying that the resulting hash is equal to the claimed Merkle hash. The advantage of the Merkle hash is that Bob only needs to ask $O(\log n)$ elements from Alice to check that a element out of $n$ has been correctly included in the Merkle hash. See ~\cite{Mer89} for more details.

\parhead{Pollution signatures}
A \emph{pollution signature scheme} (such as ~\pollutionsignatures{}) is a signature scheme consisting of the usual triplet of algorithms $(\gen,\sig,\ver)$ with a special homomorphic property that allows it to be used to detect pollution attacks in network coding.

Specifically, the source $\SOURCE$ runs the key generation algorithm $\gen$ to produce a secret key $\sk_{\SOURCE}$, together with a corresponding public key $\pk_{\SOURCE}$ that is published for everyone to use. The source $\SOURCE$ augments each outgoing packet $E$ with a special signature $\sigma_{\SOURCE}(E)$, generated by running the algorithm $\sig$ on input the secret key $\sk_{\SOURCE}$ and the packet $E$; we refer to this special signature as a \emph{validity signature} of the packet $E$ with respect to the public key $\pk_{\SOURCE}$.

When a node receives a (signed) packet $\langle E, \sigma_{\SOURCE}(E) \rangle$, he verifies the signature on the packet, by running the algorithm $\ver$ on input the public key $\pk_{\SOURCE}$, the packet $E$, and the signature $\sigma_{\SOURCE}(E)$.

Pollution signature schemes have the useful homomorphic property that, when given several packets together with their validity signatures, any node is able to compute a validity signature of any linear combination of those packets, \emph{without} communicating with the source $\SOURCE$. For example, if a node $\NODE$ receives two (signed) packets $\langle E_1, \sigma_{\SOURCE}(E_1) \rangle$ and $\langle E_2, \sigma_{\SOURCE}(E_2) \rangle$, then, for any two coefficients $\alpha$ and $\beta$, $\NODE$ can compute a validity signature of the packet $E = \alpha E_1 + \beta E_2$; in some schemes, this is done by computing  $\sigma_{\SOURCE}(\alpha E_{1} + \beta E_{2}) = \sigma_{\SOURCE}(E_1)^{\alpha} \cdot \sigma_{\SOURCE}(E_2)^{\beta}$, where each of these computations are performed in a certain field and the equality holds due to homorphism. See \pollutionsignatures{} for more details.


\subsection{A Generic Protocol}
\label{sec:highlevel}

In order to avoid repetition in the  presentation of our protocols, in this section we introduce the general structure that will be followed by each protocol version that we present; later, in any given protocol version, we will replace any unspecified quantities or procedures with concrete values or algorithms.

First we discuss the new packet structure: every packet $E$ transmitted by a generic node $\NODE$ is augmented with three cryptographic tokens; the first token has already been used in prior work, while the last two tokens are new to our protocols:
\begin{enumerate}
  \item A \emph{validity signature} $\sigma_{\SOURCE}(E)$, which is used to prevent pollution attacks. Any (secure) pollution signature scheme \pollutionsignatures{} may be used to produce this signature (as we rely only on the guarantees it provides and not on details of its implementation).
  \item A \emph{test token} $T_{\NODE}$, which is used by each child $\CHILD_{j}$ of $\NODE$ to run the verification test on $\NODE$, denoted $\vt$.
  \item A \emph{helper token} $H_{\NODE}$, which is used by each child $\CHILD_{j}$ of $\NODE$ to produce his own test token $T_{\CHILD_{j}}$, using a procedure called $\combine$. 
\end{enumerate}
Specifically, the protocol that a generic node $\NODE$ runs, after receiving packets from his parents, in order to produce a packet for each of his children, takes the general form of \algorithmref{algo:template}, where the procedures $\vt$, $\checkh$, and $\combine$, as well as the value of $H_{\NODE}$, will be specified later:

\begin{algorithm}[h!]
\caption{Protocol at a generic node $\NODE$}
\begin{algorithmic}[1]
\label{algo:template}

\STATE \label{step:packet}
       From each parent node $\PARENT_{i} \in \PARENTSET_{\NODE}$, node $\NODE$ receives a packet $\langle E_{\PARENT_{i}}, \sigma_{\SOURCE}(E_{\PARENT_{i}}), T_{\PARENT_{i}}, H_{\PARENT_{i}} \rangle$.

\STATE \label{step:checkvalid}
       For each parent node $\PARENT_{i} \in \PARENTSET_{\NODE}$, node $\NODE$ verifies that $\sigma_{\SOURCE}(E_{\PARENT_{i}})$ is a valid signature of $E_{\PARENT_{i}}$ using the public key $\pk_{\SOURCE}$ of the source $\SOURCE$, verifies that $\vt\left(E_{\PARENT_{i}}, \sigma_{\SOURCE}(E_{\PARENT_{i}}),T_{\PARENT_{i}}\right)$ accepts, and that $H_{\PARENT_i}$ is correct using $\checkh(E_{\PARENT_{i}}, \sigma_{\SOURCE}(E_{\PARENT_{i}}),H_{\PARENT_{i}})$ .

\STATE \label{step:computeE}
       Node $\NODE$ computes $E_{\NODE}$ by coding over all $E_{\PARENT_{i}}$ and $\sigma_{\SOURCE}(E_{\NODE})$, as described at the end of \sectionref{sec:cryptographic-tools}.

\STATE \label{step:computetokens}
       Node $\NODE$ computes $T_{\NODE} = \combine\big((E_{\PARENT_{i}}, \sigma_{\SOURCE}(E_{\PARENT_{i}}), H_{\PARENT_{i}})_{\PARENT_{i} \in \PARENTSET_{\NODE}}\big)$ and $H_{\NODE}$.

\STATE \label{step:send}
       Node $\NODE$ assembles the packet $\langle E_{\NODE}, \sigma_{\SOURCE}(E_{\NODE}), T_{\NODE}, H_{\NODE} \rangle$, and sends it to each child $\CHILD_{j}$. 
       
\end{algorithmic}
\end{algorithm}

In \stepref{step:checkvalid}, for each parent $\PARENT_i$ from which $\NODE$ receives a packet: $\NODE$ checks the validity signature of the packet to establish whether $\PARENT_i$ sent polluted data or not; then, $\NODE$ checks the test token $T_{\PARENT_i}$ by running the verification test to establish that $\PARENT_i$ coded correctly; next, $\NODE$ needs to make sure $\PARENT_i$ sent a correct helper token (without which $\NODE$ could not compute a good test token $T_{\NODE}$ himself and would fail the verification test at his children).

If any of the checks above fail, $\NODE$ will report them and act in some way that is application-specific. As we will see in \sectionref{sec:prove-when-cheat}, $\NODE$ can accompany his complaint with a proof that his parent cheated.

In our protocol, each node verifies his parents (if he is not the source) and is being verified by his children (if he is not a sink/destination). Thus, $\NODE$ verifiers $\PARENT_i$, and $\CHILD_j$ verifies $\NODE$.


\subsection{How to Force Byzantine Nodes to Code Over All Required Packets}
\label{sec:force-valid}

As a first step, we design a verification test that enables any child $\CHILD_{j}$ of a node $\NODE$ to check that $\NODE$ did indeed code over all of the parent nodes in his required set $\REQUIREDSET_{\NODE}$, i.e., that the packet sent by $\NODE$ to $\CHILD_{j}$ is a linear combination of packets from parents in the required set with coefficients all of which are not equal to zero.

\parhead{A na\"ive solution}
The node $\NODE$ can simply forward to each of his children all the packets received from parents in the required set. Of course, $\NODE$'s parents make sure to sign (using their own secret keys) the packets they send to $\NODE$, so that $\CHILD_{j}$ can be sure that the packets forwarded by $\NODE$ are indeed from $\NODE$'s parents. In other words, $\NODE$ forwards to each child $\CHILD_j$ the following data: $E_{\PARENT_{i}}$, $\sigma_{\SOURCE}(E_{\PARENT_{i}})$, and $\sig_{\PARENT_{i}}(E_{\PARENT_{i}} || \sigma_{\SOURCE}(E_{\PARENT_{i}}))$,  the coding coefficients used for the packets from each parent, and the newly coded payload  $E_{\NODE}$ with the new integrity signature $\sigma_{\SOURCE}(E_{\NODE})$. Each child $\CHILD_{j}$ can then establish whether $\NODE$ coded correctly, because he now has access to all the information $\NODE$ received from his parents and can thus check that $\NODE$ did not use any zero coefficients. 

Clearly, this solution is bandwidth inefficient: the payload of the packet can be very large and $\NODE$ will send $\INDEG{\NODE} + 1$ such payloads to his children, reducing throughput $\INDEG{\NODE}+1$ times. 

\parhead{Payload-Independent Protocol (\one)}
We now improve on the na\"ive solution, by avoiding to include the packet payload in the test token sent for verification, thus saving considerable bandwidth and throughput.
 
Each parent $\PARENT_{i}$ sends a helper token consisting of a parent signature on the validity signature:
\begin{equation*}
H_{\PARENT_{i}} := \sig_{\PARENT_{i}}\big( \sigma_{\SOURCE}(E_{\PARENT_{i}}) \,||\, \text{``from $\PARENT_{i}$ to $\NODE$''} \big) \enspace.
\end{equation*}
The text in $H_{\PARENT_{i}}$ prevents a colluding $\NODE$ from giving this helper token to some other node $\NODE'$, which could otherwise falsely claim that he received the data from $\PARENT_{i}$.

The test token $T_{\NODE}$ of node $\NODE$ is computed by a simple concatenation; specifically, $\combine$ computes the following test token:
\begin{equation*}
T_{\NODE} = \Big(\big(\alpha_i, \sigma_{\SOURCE}(E_{\PARENT_{i}}), H_{\PARENT_{i}}\big)\Big)_{\PARENT_{i} \in \PARENTSET_{\NODE}} \enspace,
\end{equation*}
where $\alpha_{i}$ is the coding coefficient that $\NODE$ used for the packet from $\PARENT_{i}$.

The verification test for this version of the protocol is given in \algorithmref{algo:checknoMerkle}. 

\begin{algorithm}[h!]
\caption{$\vt$ of node $\NODE$, by child $\CHILD_{j}$}
\label{algo:checknoMerkle}
\begin{algorithmic}[1]

\FOR{each parent $\PARENT_{i} \in \PARENTSET_{\NODE}$}
  
  \STATE $\CHILD_{j}$ checks that $T_{\NODE}$ contains an entry $\big(c_{i}, \sigma_{\SOURCE}(E_{\PARENT_{i}}), H_{\PARENT_{i}}\big)$ for the current parent $\PARENT_{i}$.
 
  \STATE $\CHILD_{j}$ checks that $H_{\PARENT_{i}}$ verifies with $\pk_{\PARENT_{i}}$ as a signature of $\sigma_{\SOURCE}(E_{\PARENT_{i}})$. \label{step:verifysig}
   
  \STATE $\CHILD_{j}$ checks that $\alpha_{i} \neq 0$. \label{step:nonzero}

\ENDFOR
  
  \STATE \label{step:combine}
         $\CHILD_{j}$ verifies that combining all validity signatures $\sigma_{\SOURCE}(E_{\PARENT_{i}})$ and coefficients $\alpha_{i}$ results in $\sigma_{\SOURCE}(E_{\NODE})$, according to the homomorphic property discussed in \sectionref{sec:cryptographic-tools}.

  \end{algorithmic}  
\end{algorithm} 

Step 2 verifies that $\NODE$ provided test data for parent $\PARENT_{i}$. Step 3 checks that the data is authentic. Step 4 establishes that the coefficient used in coding over this parent is nonzero. Step 6 checks that the coded data from $\NODE$ indeed corresponds to coded data over the information from the parents with the claimed coefficients $\alpha_{i}$.  
 
We now give some intuition for why \algorithmref{algo:checknoMerkle} is a good verification test, leaving a formal proof to \sectionref{sec:proofs}. If $\NODE$ does not code over the packet of some parent, say from parent $\PARENT_{1}$, in order for $\NODE$ to produce a validity signature $\sigma_{\SOURCE}(E_{\NODE})$ for $E_{\NODE}$ that verifies successfully under the public key $\pk_{\SOURCE}$, $\NODE$ needs to combine only those validity signatures from parents he coded over and not include $\sigma_{\SOURCE}(E_{\PARENT_{1}})$ in the computation; at \stepref{step:combine}, however, $\CHILD_{j}$ uses the validity signature from all required parents to check $\sigma_{\SOURCE}(E_\NODE)$ (with a coefficient $\alpha_i$ that was checked to be nonzero in \stepref{step:nonzero}) and the check would fail.
 
  $\checkh$ at $\CHILD_j$ consists of checking that $H_{\NODE}$ is indeed a signature on $E_{\NODE}$ and is not a signature on zero.
 
The length of the test token $T_{\NODE}$ is now
\begin{equation*}
|T_{\NODE} | = \INDEG{\NODE} \cdot \big( |\sigma_{\SOURCE}| + |\sig| \big) \enspace,
\end{equation*}
where $|\sigma_{\SOURCE}|$ denotes the size of the pollution signature and $|\sig|$ the size of the signature scheme introduced in \sectionref{sec:notation}. Indeed, the length of $T_{\NODE}$ does not depend on the payload any more. Also, recall that the lengths of the signatures are constant. Note, though, that $|T_{\NODE}|$ is linear in the number of parents; this may not be a problem, but in applications where the payload is not that large or where there can be many parents, it would be desirable to have a smaller token. Moreover, verifying $\INDEG{\NODE}$ digital signatures in the verification test (\stepref{step:verifysig} above) will become expensive if the number of parents is not small.

\parhead{Logarithmic Payload-Independent Protocol (\two)}
We provide a second protocol in which the length of the helper token $T_{\NODE}$ is significantly shorter:
\begin{equation*}
|T_{\NODE}| = |h| + |\sigma_{\SOURCE}| + |\sig| + 2 \cdot |\sigma_{\SOURCE}| \cdot \log(\INDEG{\NODE}) \enspace,
\end{equation*}
where $|h|$ is the size of a hash (e.g. $160$ bits for SHA-1), thus replacing $\INDEG{\NODE}$ with the much smaller value $\log(\INDEG{\NODE})$ (where the logarithm is base $2$). The second protocol that we present, however, is probabilistic in its guarantees: rather than enabling a child $\CHILD_{j}$ to test if a parent $\NODE$ cheated (with overwhelming confidence), we enable $\CHILD_{j}$ to detect misbehavior of $\NODE$ with a certain (adjustable) probability.

Specifically, after receiving the packet from $\NODE$, node $\CHILD_{j}$ picks a required parent of $\NODE$ at random and challenges $\NODE$ to prove that he coded correctly over that parent. Of course, $\NODE$ does not know ahead of time on what packets he will be challenged.  As shown in \sectionref{sec:implementation-and-evaluation}, such probabilistic approach is still quite effective because the chance that a Byzantine node $\NODE$ is detected cheating grows exponentially in the number of times he attempts to cheat. In \sectionref{sec:implementation-and-evaluation}, we provide recommendations for when we believe it is more appropriate to use \one{} or \two.

The basic idea of \two{} is that $\NODE$ will send to $\CHILD_{j}$ a test token $T_{\NODE}$ that is the root of a Merkle hash tree constructed over the data of the test token used in \one{}; namely, a Merkle hash tree where the elements at the leaves are the tuples $\big(\alpha_i, \sigma_{\SOURCE}(E_{\PARENT_{i}}),  \sig_{\PARENT_{i}}\big( \sigma_{\SOURCE}(E_{\PARENT_{i}}) \,||\, \text{``from $\PARENT_{i}$ to $\NODE$''} \big)\big)$ ranging over the parents $\PARENT_{i}$. Each $\CHILD_{j}$ will then challenge $\NODE$ by asking to see a certain path in the Merkle hash tree corresponding to a parent of $\NODE$. In this way, $\CHILD_j$ can check if $\no$ coded over that parent (i.e., $\NODE$ used a non-zero coefficient). Of course, $\NODE$ cannot provide arbitrary data to $\CHILD_{j}$ when replying the challenge of $\CHILD_{j}$, as guaranteed by the security properties of a Merkle hash. Therefore, if $\NODE$ did not code over a parent, $\CHILD_{j}$ will discover this with a known probability.

\begin{figure}[t!]
\centering
\includegraphics[scale=0.42]{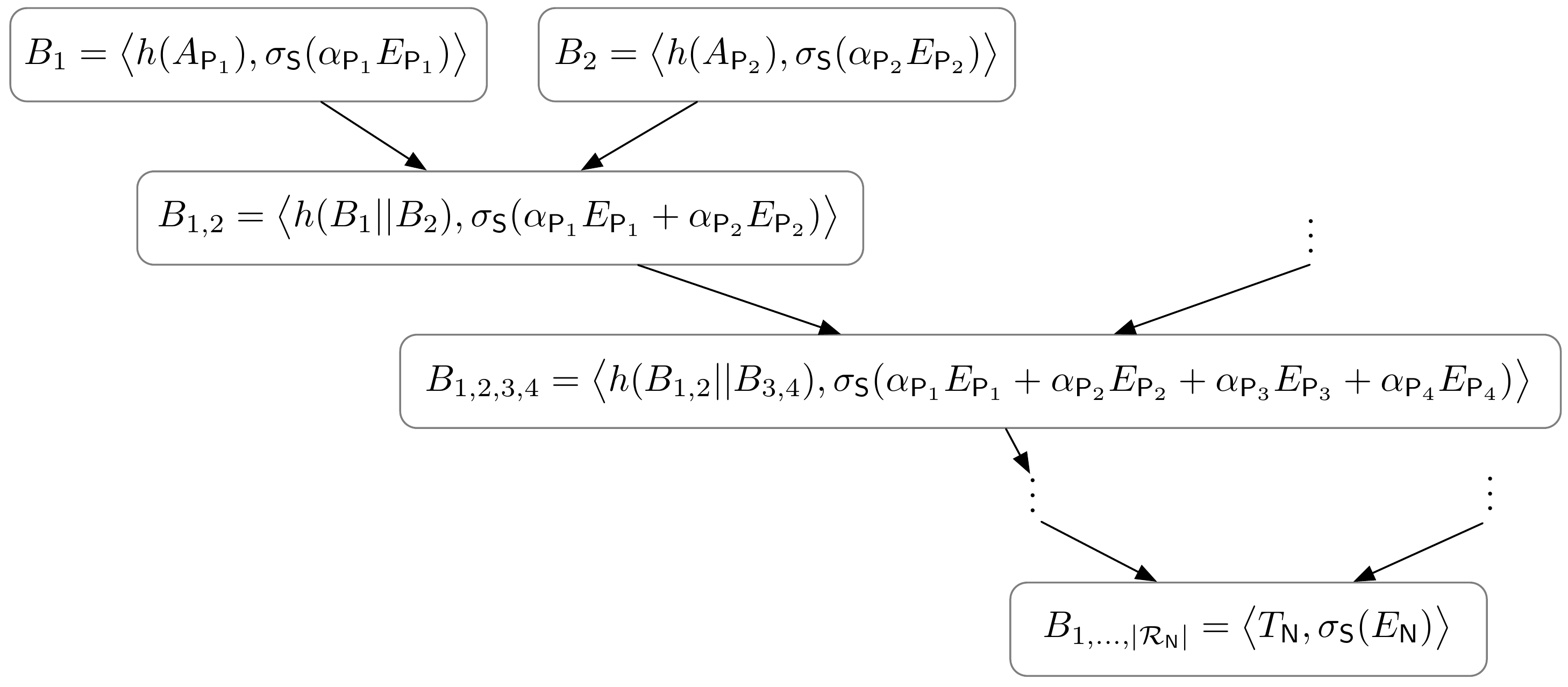}
\captionsetup{width=0.85\textwidth}
\caption{Diagram representing the computation of the helper token $T_{\NODE}$: for each parent $\PARENT_{i}$, define $A_{\PARENT_i} = \sigma_\mathsf{S}(E_{\PARENT_i}) \vert\vert H_{\PARENT_i} \vert\vert \alpha_{\PARENT_i}$; then recursively apply the hash function $h$ as indicated, and recursively compute pollution signatures with the indicated coefficients.}
\label{fig:merkle}
\end{figure}

Let $h$ be a hashing scheme. \figureref{fig:merkle} illustrates the Merkle tree that $\NODE$ has to compute and provides notation for our discussion. We slightly modify the traditional Merkle hash, by adding data at internal nodes and changing the recursion.  Each leaf node $A_{\PARENT_{i}}$ in the Merkle tree consists of a ``summary'' of the data from a required parent $\PARENT_{i}$ of $\NODE$. Each internal node consists of the validity signature obtained by coding over all the packets at the leaves of the subtree rooted by the internal node and a hash of the two children. The root node will thus contain the test token $T_{\NODE}$ as the root hash and the validity signature over $E_{\NODE}$, namely $\sigma_{\SOURCE}(E_{\NODE})$. 
Thus, $\combine$ consists of computing the Merkle hash to obtain the Merkle hash root, so that $T_{\NODE} = \text{``Merkle hash root''}$. $H_{\PARENT_{i}}$ is the same as in \one, that is, $H_{\PARENT_{i}} = \sig_{\PARENT_{i}} \big( \sigma_{\SOURCE}(E_{\PARENT_{i}}) \,||\, \text{``from $\PARENT_{i}$ to $\NODE$''} \big)$.

The verification test $\vt$ is ran in a different way than in \one. Each node $\CHILD_{j}$ receives the packet from $\NODE$, checks the validity signature, and he can proceed to code and forward the packet. It can then challenge node $\NODE$ to check if $\NODE$ indeed coded over all the packets.  During a challenge, only $\log \INDEG{\NODE}$ source signatures and hashes will be retrieved, due to the Merkle tree property. Moreover, only one digital signature will be verified, the one corresponding to the parent from the challenge. For the Merkle recursion, only hash verifications and  homomorphic signature operations (which typically consist of multiplying 1024-bit numbers) will be performed, so the overall cost is dominated by a single digital signature verification. 

The number of challenges is selected based on the desired probability of detection. With $t$ challenges, there is a probability of $t/\INDEG{\NODE}$ of detecting that $\no$ did not code over a parent. After $r$ transmissions in which $\no$ cheats, the probability of detecting $\no$ is at least $1 - (1-t/d)^r$ (this is achieved when $\no$ cheats minimally -- by not coding over one parent), which increases exponentially in $r$. Coupled with penalties, such a probabilistic approach offers incentives against cheating.

Node $\NODE$ needs to remember the values that constituted the Merkle tree until the children finished challenging it. 
One challenge checks that the node coded correctly over a parent; multiple challenges can be sent at once and processed together. 

\begin{algorithm}
\caption{Challenge on node $\NODE$ by node $\CHILD_{j}$}
\label{algo:challenge}
\begin{algorithmic}[1]

  \STATE  $\CHILD_{j}$ picks a parent $\PARENT_{i}$ of $\NODE$ at random and informs $\NODE$ of the choice.

  \STATE $\NODE$ must present $A_{\PARENT_{i}}$ (defined in \figureref{fig:merkle}) and all values of the nodes in the Merkle tree that are siblings to nodes on the path from $A_{\PARENT_{i}}$ to the root and their siblings. 

  \STATE \label{step:verif} $\CHILD_{j}$ runs $\vt$:  \begin{enumerate} [nolistsep,label*={\small \roman*}]
  \item verifies that $H_{\PARENT_{i}}$ is a correct signature over $\sigma_s(E_{\pa_i})$ using $\pk_{\PARENT_{i}}$ and that $\alpha_i \neq 0$ \label{step:coeff}
  \item verifies that the validity signature in $A_{\PARENT_{i}}$ combined with $\alpha_{i}$ is the same as the validity signature in $B_{i}$, 
  \item the validity signature at each internal node is a multiplication of the validity signatures at the children of the node, 
  \item recomputes the Merkle hash based on the hashes provided by $\no$ and checks equality to $T_{\NODE}$,
  \item checks that the validity signature, provided when $\NODE$ initially transmitted, verifies the validity signature at the top of the Merkle tree,
  \item checks $H_{\NODE}$ to be a signature using $\pk_{\NODE}$ on $\sigma_{\SOURCE}(E_{\NODE})$.
\end{enumerate}
\end{algorithmic}
\end{algorithm}


As for $\checkh$, $\CHILD_{j}$ still needs to check that $H_{\NODE}$ is indeed a signature on $E_{\NODE}$ and is not a signature on zero to prevent $\NODE$ from causing $\CHILD_{j}$ to fail the verification test at $\CHILD_{j}$'s children. Proofs of security for this protocol are included in \sectionref{sec:proofs}.

\parhead{Collusion}
Both \one{} and \two{} are collusion resistant: even if a child colludes with $\NODE$, the other children check $\NODE$ independently. Moreover, if $\NODE$ colludes with a parent $\PARENT_{1}$, $\NODE$ still needs to code correctly over the rest of the parents that he did not collude with because he cannot forge these parents signatures if $\NODE$ has at least one honest child verifying it.

\subsection{How to Force Byzantine Nodes to Code Pseudorandomly}
\label{sec:force-diverse}

As a second step, we design a verification test that enables any child $\CHILD_{j}$ of a node $\NODE$ to check, not only whether a packet received from $\NODE$ is valid and derived using non-zero coefficients over each parent in his required set (as was guaranteed by the solution presented in \sectionref{sec:force-valid}), but also whether the node $\NODE$ coded using (pseudo)random coefficients in $\mathbb{Z}_{q}^{*}$.

The basic idea is to require node $\NODE$ to generate the pseudorandom coefficients from a seed that is also known to each child $\CHILD_{j}$, so that each $\CHILD_{j}$ will be able to generate these same coefficients and use them as part of his verification test on $\NODE$.

We assume that each client knows a random seed $s$ that is public; a trusted party drew the seed at random when the system started. For example, a client can learn about the seed $s$ when he joins the system. In a wireless setting with no membership, a node can either have $s$ already hardcoded, or he can obtain it from his neighbors ($s$ can be accompanied by a signature from a trusted party to make sure that malicious neighbors cannot lie about its value). The seed can remain the same for the lifetime of the system.

Using the seed $s$, the coefficients can then be generated using a pseudorandom function $F_{s}$ (defined in  \sectionref{sec:cryptographic-tools}). 
For each parent $\PARENT_{i}$ in the parent set $\PARENTSET_{\NODE}$, the node $\NODE$ computes $\alpha^*_i = F_{s}(\PARENT_{i} \vert\vert \NODE)$ (of course, mapped to the field of the coefficients) and uses $\alpha^*_i$ as the coding coefficient for the packet from $\PARENT_{i}$. 

Observe that, contrary to what the definition of the pseudorandomness property \cite{GoldreichGM86} prescribes, the seed $s$ is \emph{not} kept private, but is instead made \emph{public}. Of course, in such a case, one cannot expect that the input-output relation induced by $F_{s}$ is unpredictable; indeed, it is deterministic, because now $F_{s}$ may be computed by anyone (and is not an ``oracle'' anymore). Nonetheless, since in our setting the inputs to $F_{s}$ are not under the control of Byzantine nodes, and are predetermined, it is easy to show that the outputs of $F_{s}$ on these inputs will still retain the statistical properties that we are interested in, allowing for the network throughput to still be maximal using these ``pseudorandom'' coefficients.

If one wishes to enable $\NODE$ to use a different set of coding coefficients for each child $\CHILD_{j}$, the computation of the coding coefficients can be changed to $\alpha^*_{i,j} = F_{s}(\PARENT_{i} \vert\vert \NODE \vert\vert \CHILD_{j} )$; thus $\NODE$ must use $\alpha^*_{i,j}$ to code over the data from $\PARENT_{i}$ when preparing a packet for child $\CHILD_{j}$. Intuitively, using different coefficients increases throughput in some topologies because of more diversity; this can be helpful in P2P networks, for example, but not so much in a wireless setting where transmitting different data to children will not take advantage of the shared medium on which multiple children can listen.

The verification tests in previous sections can now be easily modified to have each child $\CHILD_j$ check that $\no$ coded over each parent in the required set with this exact coding coefficients $\alpha^*_i$ (or $\alpha^*_{i,j}$): in \stepref{step:nonzero} of \algorithmref{algo:checknoMerkle} and in \stepref{step:verif}\ref{step:coeff} of \algorithmref{algo:challenge}, node $\CHILD_{j}$ must check that $\alpha_i$ equals $\alpha^*_i$ (or $\alpha^*_{i,j}$).
With this check in place, Byzantine nodes are forced to code with pseudorandom coefficients.  \sectionref{sec:proofs} shows that Byzantine nodes cannot code with different coefficients and pass the verification test.

\subsection{How to Prevent Replay Attacks of Old Data}
\label{sec:prevent-replay}

One problem is that a Byzantine client may code correctly for one transmission, but may attempt to cheat on the next transmission by sending the old data he sent for the first transmission. In some cases, such a strategy reduces throughput, but in others, it even pollutes packets downstream in the network. Nevertheless, the Byzantine client will pass any pollution test because the source uses the same keys for signing in both transmissions; the node will also pass our diversity tests above because he coded correctly over his parents in the first transmission.

Therefore, we need to prevent such replay attacks. In fact, the problem of replay attacks belongs to the use of pollution schemes and is not introduced by our diversity enforcement scheme. Any solution for that setting will suffice in our setting as well because of the way we build ``on top'' of validity signatures. Therefore, any overhead introduced by such a scheme already is introduced by the use of pollution schemes and does not come with diversity enforcement. 

We propose one such replay solution. 
The idea is to have the source change the validity signature key with every transmission so that any attempt by a Byzantine client to use old data would be detected when checking the validity signature. Let $(\sk_{S, k}, \pk_{S, k})$ denote the public key used by the source in the $k$-th transmission.  The source has one master signing key pair of which the public verification key is known to all users as before. To inform nodes of the public key used during a transmission, the source will send with every packet this public key accompanied by a signature of this public key using the master signing key. The source signs the public key to prevent malicious clients from forging public keys of their own and claiming they belong to the source.




For our diversity scheme, we make use of the public key corresponding to each transmission to add diversity in the coding coefficients across transmissions. Each node should now code with  $\alpha^*_i = F_{s}(\PARENT_{i} \vert\vert \NODE \vert\vert \pk_{S,k})$ and their children will check the inclusion of $\pk_{S,k}$ in the coding coefficients along with the other tests they perform; without $\pk_{S,k}$, the coding coefficients will be the same across different transmissions.



\subsection{How to Enable Nodes to Prove Misbehavior}
\label{sec:prove-when-cheat}

We discuss how any child $\CHILD_{j}$ of a node $\NODE$ can \emph{prove} $\NODE$'s misbehavior to a third party, when the verification test for $\no$ fails. Recall that the ability to convince a third party (such as the source, a membership service, or other authoritative agents in the system) that $\NODE$ did indeed misbehave is important to allow for punitive measures to be enacted. Furthermore, the ability to prove misbehavior \emph{reinforces the deterrent effect of verification tests}.

We use signatures in a natural way to provide such proofs: \stepref{step:send} of \algorithmref{algo:template} is modified so that a node $\NODE$ attaches an additional ``attest'' token to the packet he sends to his children; the attest token consists of a signature of the whole packet under his own secret key $\sk_{\NODE}$. Each child $\CHILD_{j}$ of $\NODE$ will then verify this signature (and ignore any data from $\NODE$ that does not carry a valid ``attest'' signature). 

If a child $\CHILD_{j}$ establishes that his parent $\NODE$ did not code correctly based on the verification tests in \algorithmref{algo:checknoMerkle} or \algorithmref{algo:challenge}, he can provide the packet from $\NODE$ together with his attest token as proof to a third party. Any other party knowing the required set $\REQUIREDSET_{\NODE}$ of node $\NODE$ can run the $\vt$ procedures to establish if $\NODE$ cheated. 
Of course, by the unforgeability property of the signature scheme, children of $\NODE$ cannot falsely accuse $\NODE$ of misbehavior.

\subsection{Proofs of Security}
\label{sec:proofs}

\begin{theorem}[Security of \one]
\label{thm:security-one}
\emph{In protocol \one{}, if a generic node $\NODE$ passes all checks at an honest child $\CHILD_{j}$, it means that $\NODE$ coded over the value from $\pa_1$, $E_{\pa_1}$, with precisely coefficient $c_1$ (as described in \sectionref{sec:force-diverse}), where $\pa_1$ is any generic parent from $\NODE$'s required set. }
\end{theorem}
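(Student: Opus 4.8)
The plan is to walk through every check that $\CHILD_j$ performs in the verification test $\vt$ of \algorithmref{algo:checknoMerkle} (as modified in \sectionref{sec:force-diverse}), together with the validity check of \stepref{step:checkvalid} of \algorithmref{algo:template}, and to argue that, under existential unforgeability of the ordinary (DSA) signature scheme and the security of the pollution signature scheme, the only packet $\NODE$ could have produced that passes all of them is one in which the coefficient of $E_{\pa_1}$ is exactly $c_1$. Here $c_1 = \alpha^*_1 = F_s(\pa_1 \,\|\, \NODE)$ is the pseudorandom coefficient of \sectionref{sec:force-diverse}; note $c_1 \in \mathbb{Z}_{q}^{*}$ and hence $c_1 \neq 0$. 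Fix a generic required parent $\pa_1 \in \REQUIREDSET_{\NODE}$; we take $\pa_1$ to be honest, in which case $E_{\pa_1}$ and $\sigma_{\SOURCE}(E_{\pa_1})$ denote the packet and validity signature that $\pa_1$ actually transmitted to $\NODE$ (for a colluding $\pa_1$ the same argument shows $\NODE$ coded over whatever value $\pa_1$ certified as being sent to $\NODE$).

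First I would record what ``$\CHILD_j$ accepted'' yields. From \algorithmref{algo:checknoMerkle}: the token $T_{\NODE}$ contains an entry $(c_1, \sigma_{\SOURCE}(E_{\pa_1}), H_{\pa_1})$ for $\pa_1$; the value $H_{\pa_1}$ verifies under $\pk_{\pa_1}$ as $\sig_{\pa_1}\big(\sigma_{\SOURCE}(E_{\pa_1}) \,\|\, \text{``from $\pa_1$ to $\NODE$''}\big)$; the claimed coefficient equals $\alpha^*_1$ (the diversity check that replaces \stepref{step:nonzero}); and, at \stepref{step:combine}, $\prod_{\pa_i \in \REQUIREDSET_{\NODE}} \sigma_{\SOURCE}(E_{\pa_i})^{\alpha_i} = \sigma_{\SOURCE}(E_{\NODE})$. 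From the validity check that $\CHILD_j$ also runs on $\NODE$'s own packet (\stepref{step:checkvalid}), $\sigma_{\SOURCE}(E_{\NODE})$ is a valid pollution signature of $E_{\NODE}$ under $\pk_{\SOURCE}$.

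Next I would invoke the two primitives. By existential unforgeability of $\pa_1$'s signature scheme and the honesty of $\pa_1$, the only way a value $H_{\pa_1}$ verifying as $\sig_{\pa_1}(\sigma_{\SOURCE}(E_{\pa_1}) \,\|\, \text{``from $\pa_1$ to $\NODE$''})$ can exist is that $\pa_1$ itself signed it, so $\sigma_{\SOURCE}(E_{\pa_1})$ — and hence $E_{\pa_1}$ — is the genuine packet $\pa_1$ sent to $\NODE$ (the embedded string ``from $\pa_1$ to $\NODE$'' additionally rules out a helper token $\pa_1$ issued towards a different child being replayed here, so the argument survives collusion of $\pa_1$ with third parties). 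By the homomorphic property of the pollution signature scheme (end of \sectionref{sec:cryptographic-tools}), the quantity $\sigma' := \prod_i \sigma_{\SOURCE}(E_{\pa_i})^{\alpha_i}$ that $\CHILD_j$ forms in \stepref{step:combine} is a valid pollution signature of $w := \sum_i \alpha_i E_{\pa_i}$. Combining $\sigma' = \sigma_{\SOURCE}(E_{\NODE})$ (the \stepref{step:combine} check) with ``$\sigma_{\SOURCE}(E_{\NODE})$ is valid on $E_{\NODE}$'' (the \stepref{step:checkvalid} check), the single value $\sigma_{\SOURCE}(E_{\NODE})$ is a valid pollution signature on both $w$ and $E_{\NODE}$. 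By the security of the pollution signature scheme this forces $w = E_{\NODE}$, i.e.\ $E_{\NODE} = \sum_i \alpha_i E_{\pa_i}$ with $\alpha_1 = c_1 \neq 0$; that is exactly the assertion that $\NODE$ coded over $E_{\pa_1}$ with precisely the coefficient $c_1$.

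The step I expect to be the crux is the last implication: passing from ``one signature value is simultaneously valid on $w$ and on $E_{\NODE}$'' to ``$w = E_{\NODE}$''. This does not follow from the bare pollution guarantee (``only vectors in the source's span carry a valid signature''), since both $w$ and $E_{\NODE}$ lie in that span; what one needs is the additional — but standard for the concrete constructions of~\pollutionsignatures{} — property that a validity-signature value binds to a unique vector, equivalently that exhibiting $y \neq y'$ with a common verifying signature constitutes a forgery. In the write-up I would either quote this directly from the chosen pollution-signature scheme or, to keep the argument black-box, isolate it as an explicitly assumed property of any pollution signature scheme usable with our protocol, and reduce to it: any behavior of $\NODE$ that makes $\CHILD_j$ accept without using coefficient $c_1$ on $\pa_1$ outputs precisely such a colliding pair $E_{\NODE} \neq w$, both carrying the verifying signature $\sigma_{\SOURCE}(E_{\NODE})$. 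A minor secondary caveat worth stating: if $E_{\pa_1}$ happens to lie in the span of the other $E_{\pa_i}$ then ``the coefficient of $E_{\pa_1}$'' in $E_{\NODE}$ is not unique, and the conclusion should be read as a statement about the representation $\NODE$ committed to in $T_{\NODE}$; this is harmless for the throughput analysis, since in that case no innovative contribution of $\pa_1$ was in fact dropped.
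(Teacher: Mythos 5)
Your proposal takes the same route as the paper's proof: walk through the accept conditions of \algorithmref{algo:checknoMerkle} together with the validity check in \stepref{step:checkvalid} of \algorithmref{algo:template}, reduce to unforgeability of the parent's ordinary signature to pin down $\sigma_{\SOURCE}(E_{\pa_1})$, and then appeal to the pollution signature scheme to conclude that $E_{\NODE}$ really is $\sum_i \alpha_i E_{\pa_i}$ with $\alpha_1 = c_1$. Your version is, however, noticeably more careful about the last step than the paper's own proof, which simply asserts ``by the security of the pollution scheme, it must be the case that $\NODE$ included $c_1 \cdot E_{\pa_1}$ when computing $E_{\NODE}$.'' You correctly observe that the standard pollution-signature guarantee (``no valid signature exists on a vector outside the source's span'') is not by itself sufficient here, since both $w := \sum_i \alpha_i E_{\pa_i}$ and $E_{\NODE}$ lie in that span; what is actually needed is a \emph{binding} property, namely that a fixed signature value verifies for at most one vector (equivalently, that exhibiting $w \neq E_{\NODE}$ sharing a verifying signature constitutes a break). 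For the concrete constructions cited in the paper (e.g.\ \cite{BonehFKW09}) this holds because the verification equation ties the signature to the multi-exponentiation hash of the vector, which is injective under discrete-log-type assumptions, but it is an extra property of those schemes rather than a consequence of the black-box unforgeability definition; making it an explicit hypothesis, as you propose, is the cleaner treatment. Your secondary caveat about non-uniqueness of ``the coefficient of $E_{\pa_1}$'' when $E_{\pa_1}$ lies in the span of the other parent packets is also a legitimate refinement of the theorem's wording that the paper does not mention, and your resolution (read the claim as being about the representation committed to in $T_{\NODE}$) is the right one and harmless for the downstream throughput argument.
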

\begin{proof}
\algorithmref{algo:checknoMerkle} gives $\vt$ for the \one{} protocol. If $\NODE$ passes the checks in Step 2, it means that $\NODE$ provided the triple $\big(c_1, \sigma(E_{\PARENT_1}), H_{\PARENT_1}\big)$ in $T_{\NODE}$; if $\NODE$ passes the checks in tep 3 and Step 4, it means that $\pa_1$ indeed provided $\sigma(E_{\pa_1})$ and $c_1 \neq 0$; if $\NODE$ passes the check in Step 6, it means that $\NODE$ computed $\sigma(E_{\NODE})$ by including $\sigma(E_{\pa_1})$ with coefficient $c_1$ in the homomorphic computation (described in \sectionref{sec:cryptographic-tools}).

In Step 2 of \algorithmref{algo:template} when run by $\CHILD_{j}$, the node $\CHILD_{j}$ checks that $\sigma(E_{\NODE})$ verifies as a signature of $\NODE$. By the theorem's hypothesis, the pollution signature verifies, so that, by the security of the pollution scheme (detailed in~\cite{BonehFKW09}), it must be the case that $\NODE$ included $c_1 \cdot E_{\pa_1}$ when computing $E_{\NODE}$.  
\end{proof}

\begin{theorem}[Security of \two]
\label{thm:security-two}
\emph{In protocol \two, if a generic node $\NODE$ did not code over any given parent, say $\pa_1$, from his required set with coefficient $c_1$ (as described in \sectionref{sec:force-diverse}), and an honest child $\CHILD_{j}$ challenges $\NODE$ on $t$ random parents, the probability that $\NODE$ is detected (some check fails) is at least $t/\INDEG{\NODE}$.}
\end{theorem}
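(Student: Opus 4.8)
The plan is to reduce the statement to two essentially independent ingredients and then combine them: (a) a combinatorial bound stating that a uniformly random challenge set of size $t$ among the $d := \INDEG{\NODE}$ required parents contains the particular parent $\pa_1$ with probability exactly $t/d$; and (b) a soundness statement asserting that \emph{whenever} the challenge set contains $\pa_1$, some check in \algorithmref{algo:challenge} fails (except with probability negligible in the security parameters of the underlying primitives). Granting (a) and (b), the probability that $\NODE$ is detected is at least the probability that $\pa_1$ is challenged, which is $t/d$, as claimed (up to a negligible term coming from the cryptographic primitives, which the statement suppresses).

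For (a) I would observe that $\CHILD_j$ challenges a uniformly random $t$-subset of $\NODE$'s required parents, so $\Pr[\pa_1 \text{ not challenged}] = \binom{d-1}{t}\big/\binom{d}{t} = (d-t)/d$ and hence $\Pr[\pa_1 \text{ challenged}] = t/d$; this is exactly where the ``without replacement'' reading of ``$t$ random parents'' matters, since sampling with replacement would only give $1-(1-1/d)^t$, which can be strictly smaller than $t/d$. For (b) I would run, along the single root-to-leaf path exposed by a challenge on $\pa_1$, a localized version of the argument behind \theoremref{thm:security-one}. Since $\NODE$ committed to the Merkle root $T_{\NODE}$ and to $\sigma_{\SOURCE}(E_{\NODE})$ \emph{before} learning the challenge, collision-resistance of the Merkle hash forces the opened leaf $A_{\pa_1} = \sigma_{\SOURCE}(E_{\pa_1})\,\vert\vert\,H_{\pa_1}\,\vert\vert\,\alpha_{\pa_1}$ and the opened chain of internal validity signatures to be the committed ones (otherwise $\NODE$ has produced a hash collision). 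Given this, if all of checks (i)--(vi) of \algorithmref{algo:challenge} accept, then: check (i) (with the diversity modification of \sectionref{sec:force-diverse}) certifies that $\pa_1$ genuinely signed $\sigma_{\SOURCE}(E_{\pa_1})$ and that the exposed coefficient equals the prescribed pseudorandom value $c_1 = \alpha_1^{*} \neq 0$; checks (ii)--(v) certify that folding $\sigma_{\SOURCE}(E_{\pa_1})$ with coefficient $c_1$ through the homomorphic chain reproduces exactly the transmitted $\sigma_{\SOURCE}(E_{\NODE})$. By unforgeability of $\pa_1$'s signature scheme and the security of the pollution signature scheme (the same facts invoked in \theoremref{thm:security-one}, following \cite{BonehFKW09}), this implies $E_{\NODE}$ genuinely includes $c_1\cdot E_{\pa_1}$, contradicting the hypothesis. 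To be fully careful I would treat the two cheating modes separately: if $\NODE$ omitted $\pa_1$ altogether, the homomorphic chain from the committed leaves cannot terminate at the transmitted $\sigma_{\SOURCE}(E_{\NODE})$ without re-incorporating $\sigma_{\SOURCE}(E_{\pa_1})$, so check (v) fails; if $\NODE$ used a wrong coefficient $c_1' \neq c_1$, then either the exposed leaf shows $c_1' \neq \alpha_1^{*}$ and check (i) fails, or the leaf falsely claims $\alpha_1^{*}$ and the value reconstructed by checks (ii)--(v) disagrees with the transmitted $\sigma_{\SOURCE}(E_{\NODE})$.

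I expect (b) to be the main obstacle: the delicate point is ruling out a cheating $\NODE$ who plants an \emph{internally consistent} but wrong subtree (wrong coefficients, or a permuted/duplicated multiset of parent packets) that nonetheless hashes and ``homomorphically signs'' up to the committed root. This is precisely where collision-resistance of the (modified) Merkle hash and the unforgeability/one-wayness of the pollution signature have to be used in combination, and where the negligible error terms creep in, so the honest conclusion is really ``detected with probability at least $t/d - \mathrm{negl}$''. By contrast, the combinatorial step (a) and the per-check bookkeeping are routine, and the collusion robustness (an honest $\CHILD_j$ still detects $\NODE$ even if other children or grandparents misbehave) follows as in \theoremref{thm:security-one} because $\CHILD_j$'s verification uses only public keys and the self-certifying tokens $H_{\pa_i}$.
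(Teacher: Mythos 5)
Your combinatorial step (a) is fine (and your observation that the exact $t/\INDEG{\NODE}$ bound requires sampling the $t$ parents without replacement is a fair point the paper leaves implicit), but step (b) contains a genuine gap, and it is not just the negligible cryptographic error terms you flag. The claim that \emph{whenever} $\pa_1$ is challenged some check of \algorithmref{algo:challenge} fails is false: a cheating $\NODE$ can place the entirely honest tuple $\big(c_1,\sigma_{\SOURCE}(E_{\pa_1}),H_{\pa_1}\big)$ at leaf $A_{\pa_1}$ and hide the discrepancy in an \emph{unopened} sibling subtree. Concretely, with two required parents, let $\NODE$ transmit $E_{\NODE}=c_2E_{\pa_2}$ (omitting $\pa_1$), build leaf $1$ and $B_1=\sigma_{\SOURCE}(c_1E_{\pa_1})$ honestly, and label $B_2$ with the validity signature $\sigma_{\SOURCE}(c_2E_{\pa_2}-c_1E_{\pa_1})$, which he can compute by homomorphism from signatures he legitimately holds. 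The root signature is then $B_1\cdot B_2=\sigma_{\SOURCE}(c_2E_{\pa_2})=\sigma_{\SOURCE}(E_{\NODE})$. A challenge on $\pa_1$ opens only $\pa_1$'s root-to-leaf path; the sibling's validity signature is taken at face value and is never re-derived from its own leaves, so checks (i)--(vi) all pass, and in particular the homomorphic chain \emph{does} terminate at the transmitted $\sigma_{\SOURCE}(E_{\NODE})$, contrary to what you assert when arguing that check (v) must fail in the omission case. Detection here comes only from challenging $\pa_2$, whose leaf cannot satisfy checks (i)--(ii) without forging $\pa_2$'s signature or breaking the pollution scheme.

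The theorem remains true, but the correct decomposition is the one the paper uses: condition on the \emph{location of the defect} in the committed tree rather than on whether $\pa_1$ is challenged. If $\NODE$ did not fold $c_1E_{\pa_1}$ into $E_{\NODE}$, then the Merkle-bound tree must contain a defect somewhere --- a leaf violating check (i), an internal node violating (ii) or (iii), or a root mismatch caught by (v) with probability $1$ --- since otherwise the root validity signature equals $\sigma_{\SOURCE}(\sum_i c_iE_{\pa_i})$ and unforgeability of the pollution signature forces $E_{\NODE}=\sum_i c_iE_{\pa_i}$, a contradiction. A uniformly random challenge on $t$ distinct parents hits the subtree below any fixed defect with probability at least $t/\INDEG{\NODE}$, which gives the stated bound. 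Your localized path argument is the right tool for analyzing what a single opened path certifies, but it must be applied to the path through the defect, not to the path through $\pa_1$.
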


\begin{proof}
The strategy of the proof is to present some exhaustive cases in which $\NODE$ could not have coded over a parent, and show that in each such case the probability of detection is $\ge t/\INDEG{\NODE}$. Consider the tree $\mathsf{Tree}_{\NODE}$ of values that $\NODE$ used when he computed the Merkle hash that he gave to $\CHILD_{j}$. Because of the Merkle hash guarantees, $\NODE$ cannot come up with any other tree (that is not a subtree of $\mathsf{Tree}_{\NODE}$), that has the same Merkle root hash. If any leaf $i$ in this tree (if a leaf exists) does not satisfy check (i) in Step 3 of \algorithmref{algo:challenge}, it will be caught if $\CHILD_{j}$ challenges $\NODE$ on parent $\PARENT_{i}$, which happens with probability $t/ \INDEG{\NODE}$. Similarly, if any internal nodes $B_i$ do not satisfy check (ii), $\CHILD_{j}$ will detect this with probability at least $t/\INDEG{\NODE}$. Therefore, we can assume that the first level of internal nodes in the $\mathsf{Tree}_{\NODE}$ consists of the expected hashes and $\sigma(c_i E_{\pa_i})$ where $c_{i}$ is the desired coefficient and $\sigma(E_{\pa_i})$ is indeed the validity signature from parent $\PARENT_{i}$. If any internal node in $\mathsf{Tree}_{\NODE}$ does not satisfy check (iii), this will be detected whenever $\NODE$ is challenged on a value $i$ that involves a path through the Merkle tree passing through the broken internal node; this happens with probability at least $t/\INDEG{\NODE}$. Therefore, assuming all internal nodes pass check (iii), it means that the validity signature at the top of the tree must be $\sigma(\sum_i c_i E_{\pa_i})$. If the validity signature at the top of $\mathsf{Tree}_{\NODE}$ does not match the one initially provided by $\NODE$ (i.e., $\sigma(E_{\NODE})$), check (v) will fail with probability 1. Assuming, this check succeeds it must be the case that the validity signature initially provided by $\NODE$ is a proper validity signature after coding with $c_{i}$ over all $\PARENT_{i}$. Since the validity signature matched $E_{\NODE}$ (check (2) of \algorithmref{algo:template} when run at child $\CHILD_{j}$), it means that $\NODE$ coded over all parents with the right coefficients, by the guarantees of the validity signature. Therefore, there are no more cases of possible cheating from $\NODE$ to consider and since all previous types of cheating were caught with chance $\ge t/\INDEG{\NODE}$, the proof is complete.
 \end{proof}



\section{Applications and Extensions}
\label{sec:applications}

In this section, we describe applications and extensions of our protocol. 

\subsection{Types of Required Sets}

In our protocols so far, we considered that a child of node $\NODE$ performs the verification test on a specific set of required parents for $\NODE$. However, one can use different types of verification tests, some being more useful for certain settings, as we will see. All these verifications, in fact, just map to verifying a specific required set as before.  

A child $\CHILD_j$ can perform any of the following checks for node $\NODE$:

\newcounter{reqcount}

\begin{list}{(\arabic{reqcount})}  {\usecounter{reqcount} \setlength{\leftmargin}{1em}}

\item \label{item:all}
        \emph{$\NODE$ coded over all his parents or over a specific set of parents.} 
        
\item \label{item:threshold}
        \emph{Threshold enforcement: $\NODE$ coded over at least $d$ parents.} This check can be enforced by having $\NODE$ send an indication of which parents he coded over with their public keys and certificates (defined in \sectionref{sec:threat-model}): $\CHILD_{j}$ checks that these are at least $d$ in number, checks the certificate of each public key to make sure $\NODE$ did not falsify these keys, and that $\NODE$ indeed coded over them.

\item \label{item:subset}
        \emph{$\NODE$ coded over at least some subset of parents.} This is a combination of \itemref{item:all} and \itemref{item:threshold}. $\CHILD_j$ checks that $\NODE$ coded over the subset of parents as in \itemref{item:all} and over some valid parents as in \itemref{item:threshold}. 
        

\item \label{item:app-policy}
        \emph{$\NODE$ coded over a set of parents with some application-level property.} For example, $\NODE$ must code over at least two parents noted by some application as high priority and at least five parents in total. The priority of each node $\PARENT_i$ can be included in the certificate $\cert_{\PARENT_i}$. $\NODE$ again indicates the nodes he coded over to $\CHILD_{j}$ along with their public keys and certificates, and $\CHILD_j$ checks that at least two certificates contains high priority and there are at least five in total.  Other general application semantics can be supported by this verification case.

\end{list}


\subsection{Applications and Required Sets}
\label{sec:required-set}

In this section, we describe the various settings to which our protocols are applicable, and how the nodes would learn of the required set of their parents. 

Our model applies to settings in which a node can learn the required set of his parents, such as:

\newcounter{appcount}

\begin{list}{\arabic{appcount})}  {\usecounter{appcount} \setlength{\leftmargin}{1em}}

\item \label{c:memb} \textit{Systems with a membership service:} the membership service can inform a node of his grandparents when the node joins and when changes occur. Some peer-to-peer and content distribution systems fall in this category.

\item \label{c:link} \textit{Systems having a reliable yet potentially low capacity channel} besides  the channel where the coding occurs (which may be less reliable, but has higher capacity): the reliable channel can be used to communicate topology changes between nodes. Some examples of applications are decentralized peer-to-peer applications and content distribution, as well as some wireless networks.

\item \label{c:static} \textit{Static topologies:} these topologies do not change or change rarely. The topology is mostly known to the nodes (e.g., nodes can discover it when joining), so a node will know his grandparents. Wired as well as some wireless network applications fall in this category.
For wired networks, since the topology is more static and delays tend to be lower, more aggressive verification tests can be implemented (e.g. the required set is most of the parents or all of the parents, depending on the particular system).

\item \label{c:dynamic} \textit{Moderately dynamic wireless topologies:} the set of grandparents for a node may change many times, after each change, it remains the same  for enough time allowing the node to discover the new grandparents.  

Let us discuss how a child can learn about his changing grandparents in dynamic topologies. First of all, for such topologies, we recommend nodes use the threshold enforcement scheme (described in (\itemref{item:threshold} above) because the set of parents of a node changes dynamically. The threshold should be adjusted based on some minimum number of links a node is expected to have in order to code diversely. 

Consider that parents of node $\NODE$ have changed and child $\CHILD_j$ wants to learn about this. We use the same links used by packet flow to inform $\CHILD_j$ of his grandparents. Each new parent $\PARENT_i$ sends $\NODE$: his public key and the corresponding certificate $\cert_{\PARENT_i}$. $\NODE$ sends this information to $\CHILD_i$. Let's discuss the case when $\NODE$ is malicious and may try to inform $\CHILD_i$ of incorrect parent list. Note that $\NODE$ cannot lie that $\PARENT_i$ is a parent when he is not because, if $\NODE$ does not have a link to $\PARENT_i$, during transmission time, nodes $\CHILD_i$ will verify that $\NODE$ coded over the data from $\PARENT_i$ which $\NODE$ could not have done because he did not receive this data. Moreover, $\NODE$ cannot create some public keys of his own and claim that some parents with those public keys exist, because each node key has a certification as discussed. On the other hand, $\NODE$ may try to simply not report any of his parents so that he does not have to forward or code over any data. However, each child $\CHILD_i$ will expect $\NODE$ to report at least a threshold of parents; if $\NODE$ does not do so, $\CHILD_i$ can be suspicious and denounce $\NODE$ of potentially being malicious, as discussed in \sectionref{sec:threat-model}. Therefore, \textit{$\NODE$ can choose which $d$ parents to code over from the set of parents physically linked to it, but he cannot choose less than $d$ such parents}.

\end{list}

However, our scheme would not work well for highly changing topologies that also do not fall under any of~\itemref{c:memb} or~\itemref{c:link}. Such an example are military ad-hoc wireless networks where the nodes are in constant rapid movement; this would not allow a child to discover his grandparents effectively.




\subsection{Extensions}\label{ref:extensions}


In this section, we describe how our protocol could be applied to other network coding scenarios.


First, note that we did not make any assumption about what a link or a node really is. A link can be a physical link, a chain of physical links, or even a subnetwork. For example, in a peer to peer network, a link can include an entire subnetwork via which some peers send data to a receiving peer. In this case, our protocol can be used to check that the receiving peer coded over all sender peers when he forwards the packets to some other peer. As another example, a link in a wired network may represent a connection, while a link in a wireless network may be the  ability to hear/communicate with another node or be an edge induced by the data  transmission graph. Moreover, a node can be a physical node (a router, a peer in a P2P network) or a subnetwork; in fact, a few nodes in our model can form one node for a certain system. Using these observations, we can express constraints of real-world networks:

\parhead{Multiple packets may be sent on some links}
Consider that parent $\PARENT_{i}$ has a capacity of $p$ packets on the link to node $\NODE$. In this case, in our protocol, $\PARENT_{i}$ will be represented as $p$ different nodes, each with a different public key. With this transformation, our protocol can be used unchanged.

\parhead{Broadcast links} 
Broadcast in wireless can be mapped to our model by having the parent have one link (the same link) to all his children (basically, viewing all children as one child), and our protocols can be applied unchanged.

\parhead{Multi-source network coding}
In the multi-source network coding case, intermediate nodes combine packets for different files from different sources, but each source operates independently and may not communicate with the others. In such work, the metadata of the packet is augmented with information about which source and which file identifiers the current packet contains. 

To support our protocols in the multi-source case, note that \one{} and \two{} depend on source information only when checking validity signatures. Moreover, our protocols are built modularly on top of a validity signature and do not depend on any particular scheme. This means that all we need is a multi-source validity signature and the rest of the algorithms will remain unchanged. Recent work~\cite{AgrawalBBF10} proposes such schemes: sources can send packets independently of each other, each packet contains a validity signature, and these signatures can be checked at each intermediate node by knowing the public keys of each of these sources.  Children will be able to check if their parents coded over the appropriate grandparents as before.

\parhead{Asynchronous networks and delay intolerant networks} 
A child may receive data from his parents at different times. For efficiency reasons, the child may have to code over the data that he received already and send the data forward, and not wait until a piece arrived from every parent. In this case, the child $\NODE$ can enforce the threshold verification above, thus checking that the packet from $\NODE$ is coded over at least a few parents. 





\parhead{Various levels of abstraction} 
Our protocol can be used at various levels of abstraction. For example, in peer-to-peer networks, nodes can perform:

\begin{itemize}[nolistsep]

  \item \emph{End-to-end check.} 
        A peer can check that the data from another peer is the result of coding over the data of all of certain sources, even if those sources communicated with the tested peer via other nodes or networks. 
        
  \item \emph{Individual node check.}
        A peer can check that the data from another peer is the result of coding over all of certain peers to which this peer should be connected to according to the Peer-to-Peer algorithm they run or whatever application they run.
\end{itemize}

A lot of P2P systems are taking advantage of smartphones nowadays. In \sectionref{sec:evaluation}, we show that our protocol is efficient even when run on a smart phone such as Android Nexus One.  




\section{Implementation and Evaluation}
\label{sec:implementation-and-evaluation}\label{sec:evaluation}



In this section, we evaluate the usefulness and the performance of our protocol.




\subsection{Simulation}
\label{sec:simulation}

We run a Python simulation to show that there is significant throughput loss due to Byzantine behavior not detected in previous work, but detected in our protocols. 
We examined three types of node behavior:
 \emph{(Mode 1)} Byzantine nodes choose coding coefficients such that their packet does not provide new information at their children;
\emph{(Mode 2)} Byzantine nodes simply forward one of the received packets (and do not code);
\emph{(Mode 3)} Byzantine nodes are forced to code with pseudorandom coefficients.
We can see that neither Mode $1$ nor Mode $2$ are detected by prior work on pollution schemes, but both are detected by our protocols. Mode $3$, which is the correct behavior, is enforced only by our protocols.

\begin{figure}[t!]
\centering
\includegraphics[scale=0.39]{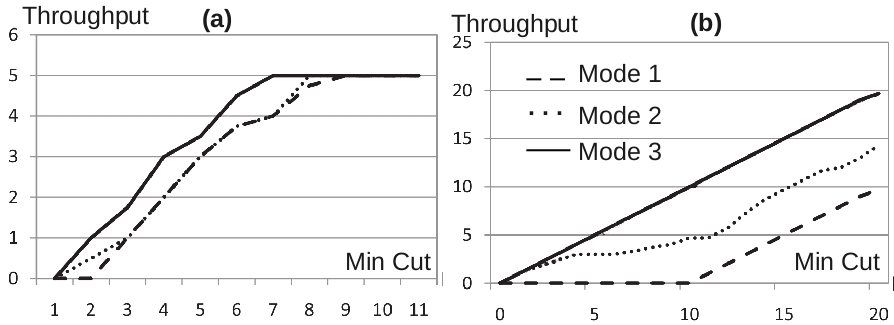}
\captionsetup{font=footnotesize}
\caption{(a) One and (b) ten Byzantine nodes on the mincut. \label{fig:graph1} \vspace{-0.7cm}}
\end{figure}



The simulation constructs a graph by assigning edges at random between nodes, but maintaining the given minimum cut. The Byzantine nodes are placed on the minimum cut. 
We ran the simulation for [$50$ nodes, $1000$ edges, $5$ packets sent from the source, min-cut up to $10$, 1 Byzantine node] and [$100$ nodes, $2000$ edges, $20$ packets send from the source, min-cut value up to $20$, $10$ Byzantine nodes]. \figureref{fig:graph1} shows the throughput (i.e., the degrees of freedom) at the sink plotted against the min-cut value.
We can see  that the throughput difference between Modes $1$/$2$ and Mode $3$ is significant. Moreover, when the min-cut value of the network is small (e.g., $5$), the throughput increase when using Mode $3$ can be as large as twice (see min-cut value of $3$ in \figureref{fig:graph1}(a)). 
In \figureref{fig:graph1} (b), we can see a more significant throughput difference. Mode $3$ has a throughput of about $10$ degrees of freedom more than Mode $1$ (which is $50\%$ of the data sent by the source) and about $5$ degrees of freedom more than Mode $2$ (which is $25\%$ of the data sent by the source). 





\subsection{Implementation}
\label{sec:implementation}

We implemented our protocol as a library (called \textit{SecureNetCode}) in C/C++ and Java, as well as embedded it into the Android platform. The C/C++ implementation is useful for lower level code that is meant to be fast: network routers, various wireless settings, and other C/C++ programs. The Java implementation is useful for higher-level programs such as P2P applications. We embedded the Java implementation in the Android platform and ran it on a Nexus One smartphone. The reason is that, with the growing popularity of smartphones, more P2P content distribution applications for smartphones are developed, some using network coding (\cite{socialtv}, \cite{mobile}).

Our library implementation is available at \url{www.mit.edu/~ralucap/netcode.html}\hspace{1mm}. It consists of the functions in protocols \one{} and \two{}. 
Our library in C/C++  consists of $290$ lines and the one in Java consists of $274$ lines including comments and white lines, but excluding standard, number theory or cryptographic libraries. To implement certain cryptographic operations on large numbers, we used NTL in C/C++ and BigInteger in Java. As cryptographic algorithms, we used OpenSSL DSA and SHA. The size of the validity signature used is $1024$-bit.

\parhead{Results}  
Except for the Android results which were run on a standard Nexus One smartphone, the rest of the results were run on a dual-core processor with $2.0$ GHz and $1$ GByte of RAM. There was observable variability in the results (especially for Nexus One), so we ran the experiments up to $100$ times to find an average time.

Note that we only evaluate the performance of our diversity scheme and do not evaluate the performance of any pollution signature protocol. The reason is that our protocol is not tied to any particular such scheme and uses it modularly. To enforce that nodes code with coefficients of one (\sectionref{sec:force-valid}), the most important step for throughput, we invoke the pollution scheme no more than it is invoked without our diversity checks. To enforce our full protocol with pseudorandom coefficients, during verification, each node computes one additional homomorphic operation of the integrity signature (per parent for \one{} and per challenge for \two{}), typically an exponentiation in a certain group: $\sig_{\so}(E_{\pa_i})^{\alpha^*_i}$. Fortunately, the coding coefficients are typically relatively small, e.g., 64 bits (even though the integrity signature allows them to be as large as $q$ as explained in \sectionref{sec:notation}). Note that the pollution signature verification, which is expensive, is not called additionally. 


In \tableref{table:results}, we present performance results of \one{} and \two{} using one challenge. We consider an integrity signature of size $1024$ bits and coding coefficients of size $64$ bits. 

\begin{table}[t!]
\centering
{\footnotesize
\begin{tabular}{@{}r|cc|cc|cc|@{}}
\cline{2-7}
& \multicolumn{2}{c|}{C/C++} & \multicolumn{2}{c|}{Java} & \multicolumn{2}{c|}{Android} \\
  & \one & \two &  \one &  \two & \one &  \two \\ 
\bottomrule
\multicolumn{1}{|r|}{1}  & 0.2/0.3 & 0.3/0.2 & 2.3/4.5 & 2.7/4.5  & 4.7/4.2  & 4.9/6.9   \\
\hline
\multicolumn{1}{|r|}{2}  & 0.2/0.6 & 0.3/0.2 & 2.3/9   & 2.7/4.6  & 4.7/7.6  & 5.1/7.1   \\
\hline
\multicolumn{1}{|r|}{3}  & 0.2/0.8 & 0.3/0.3 & 2.3/14  & 2.8/4.6  & 4.7/15.4 & 5.7/10.4  \\
\hline
\multicolumn{1}{|r|}{5}  & 0.2/1.4 & 0.3/0.3 & 2.3/23  & 2.8/4.7  & 4.7/24.4 & 6.7/10.5  \\
\hline
\multicolumn{1}{|r|}{7}  & 0.2/1.9 & 0.3/0.3 & 2.3/32  & 2.9/4.7  & 4.7/35.4 & 10.2/10.8 \\
\hline
\multicolumn{1}{|r|}{10} & 0.2/2.8 & 0.3/0.4 & 2.3/45  & 2.9/4.7  & 4.6/70.6 & 11.9/10.3 \\
\hline
\multicolumn{1}{|r|}{15} & 0.2/4.2 & 0.3/0.4 & 2.3/68  & 3.0/4.7  & 4.6/101  & 11.7/10.4 \\
\hline
\multicolumn{1}{|r|}{50} & 0.3/14  & 0.4/0.4 & 2.3/224 & 3.4/ 4.7 & 4.6/351  & 28.5/15.6 \\
\hline
\hline 
 \multicolumn{1}{|r|}{$+$}   & 0.95$|\REQUIREDSET|$  & 0.95 & 8.8$|\REQUIREDSET|$ & 8.8 & 15.4$|\REQUIREDSET|$ & 15.4 \\
\toprule
\end{tabular}
}
\captionsetup{font=footnotesize}
\caption{Performance results of \one{} and \two{} in milliseconds. The first $8$ rows with values show results for \one{} and \two{} when all coding coefficients are one (\sectionref{sec:force-valid}). The first column indicates the number of parents of a node. Each data cell in the rest of the columns consists of two values: transmission time and verification time. The last row shows the additional cost (only for verification) when adding pseudorandom coefficients (\sectionref{sec:force-diverse}) due to the homomorphic operation of the validity signature. \vspace{-0.7cm} }
\label{table:results}
\end{table}

We can see that, for verification, as we increase the number of parents, the overhead of \two{} increases very slowly (logarithmically) as compared to the linear performance of \one. The same happens to packet size, which we evaluate later in this section. Therefore, \textit{we recommend using \two{} for scenarios with more than three parents, and \one{} for cases with at most three parents.}  Alternatively, one could select a hybrid algorithm by performing $r > 1$ challenges from \two. The performance of \two{} grows linearly in the number of challenges so one can tune the probability of detection (see \sectionref{sec:protocol}) based on the desired tradeoff with performance overhead.

We can see that the C/C++ protocols impose modest overhead. For $10$ parents, which is a reasonably large value, the running time at a node to prepare for transmitting the data is $\approx0.25$ ms and the time to verify a packet's diversity $1.4$ ms in total for \two{}; for three parents, the time to verify diversity is $3.7$ ms for \one. All these values are \textit{independent of how large the packet payload is}. Let's compare this to the cost of a pollution scheme, for example~\cite{BonehFKW09}. In this scheme, the verification consists of two bilinear map computations and $m+n$ modular exponentiations, resulting in at least $100$ ms run time for verification in C using the PBC library for bilinear maps for each parent. For three parents, the relative overhead of \one{} is thus $<2\%$ and of \two{} is $<0.5\%$. Due to this low additional overhead, we believe that if one is already using a pollution scheme, one might as well also use our scheme in addition to provide diversity. 

The Java and Android implementations are slower because of the language and/or device limitations of the Nexus One. Nevertheless, we believe these implementations still perform well when used for higher level applications like P2P content distribution. 

\subsection{Packet Size}
\label{sec:packet-size}

For \one{}, the packet size increase in \one{} is $\INDEG{\NODE} \cdot (|\sigma_{\SOURCE}| + 320) + 320$ bits and the sum of packet increase and information sent during challenge phase in \two is $480 +  |\sigma_{\SOURCE}|  + 2 |\sigma_{\SOURCE}| \log(\INDEG{\NODE}) $ bits, where $\INDEG{\NODE}$ is the number of parents to code over. Recall that $|\sigma_{\SOURCE}|$ is the size of the validity signature, and depends on the validity scheme used. For instance, if \cite{BonehFKW09} is used, we have an increase in \one{} of  $480 \cdot \INDEG{\NODE}  + 320$ bits and in \two{} of $640 +  320 \cdot \log(\INDEG{\NODE})$ bits. As discussed in \sectionref{sec:protocol}, the packet size does not increase as the payload grows, so such overhead becomes insignificant when transmitting large files.

\section{Conclusions}
\label{sec:conclusions}

In this paper, we presented two novel protocols, \one{} and \two{}, for detecting whether a node coded correctly over all the packets received according to a random linear network coding algorithm. No previous work defends against such diversity attacks by Byzantine nodes. Our evaluation shows that our protocols are efficient and the overhead of both of our protocols does not grow with the size of the packet payload.

\bibliographystyle{alpha}
\bibliography{netcode}

\newcommand{\etalchar}[1]{$^{#1}$}
\begin{thebibliography}{WVNK10}

\bibitem[AB09]{AgrawalB09}
Shweta Agrawal and Dan Boneh.
\newblock Homomorphic {{MAC}}s: Mac-based integrity for network coding.
\newblock ACNS, 2009.

\bibitem[ABBF10]{AgrawalBBF10}
Shweta Agrawal, Dan Boneh, Xavier Boyen, and David~Mandell Freeman.
\newblock Preventing pollution attacks in multi-source network coding.
\newblock In {\em PKC~'10: Proceedings of the 13th International Conference on
  Practice and Theory in Public Key Cryptography}, pages 161--176. Springer,
  2010.

\bibitem[ACLY00]{AhlswedeCLY00}
Rudolf Ahlswede, Ning Cai, Shuo-Yen~Robert Li, and Raymond~W. Yeung.
\newblock Network information flow.
\newblock {\em IEEE Trans. Inf. Theory}, 2000.

\bibitem[BFKW09]{BonehFKW09}
Dan Boneh, David~Mandell Freeman, Jonathan Katz, and Brent Waters.
\newblock Signing a linear subspace: Signature schemes for network coding.
\newblock In {\em PKC~'09: Proceedings of the 12th International Conference on
  Practice and Theory in Public Key Cryptography}, pages 68--87. Springer,
  2009.

\bibitem[CJL06]{CharlesJL06}
Denis Charles, Kamal Jain, and Kristin Lauter.
\newblock Signatures for network coding.
\newblock In {\em CISS~'06: Proceedings of the 40th Annual Conference on
  Information Sciences and Systems}, pages 857--863, 2006.

\bibitem[DCNR09]{DongCN09}
Jing Dong, Reza Curtmola, and Cristina Nita-Rotaru.
\newblock Practical defenses against pollution attacks in intra-flow network
  coding for wireless mesh networks.
\newblock WiSec, 2009.

\bibitem[Fit08]{mobile}
Frans Fitzek.
\newblock Network coding for mobile phones.
\newblock Online at
  \url{http://blogs.forum.nokia.com/blog/frank-fitzeks-forum-nokia-blog/2008/10/06/network-coding},
  2008.

\bibitem[GGM86]{GoldreichGM86}
Oded Goldreich, Shafi Goldwasser, and Silvio Micali.
\newblock How to construct random functions.
\newblock {\em Journal of the ACM}, 1986.

\bibitem[GR05]{GkantsidisR05}
Christos Gkantsidis and Pablo Rodriguez.
\newblock Network coding for large scale content distribution.
\newblock In {\em INFOCOM}, 2005.

\bibitem[GR06]{GkantsidisR06}
Christos Gkantsidis and Pablo Rodriguez.
\newblock Cooperative security for network coding file distribution.
\newblock In {\em INFOCOM}, 2006.

\bibitem[Har11]{socialtv}
Larry Hardesty.
\newblock Secure, synchronized, social tv.
\newblock Online at
  \url{http://web.mit.edu/newsoffice/2011/social-tv-network-coding-0401.html},
  2011.

\bibitem[HKM{\etalchar{+}}03]{HoKMKE03}
Tracey Ho, Ralf Koetter, Muriel M\'{e}dard, David~R. Karger, , and Michelle
  Effros.
\newblock The benefits of coding over routing in a randomized setting.
\newblock In {\em ISIT}, 2003.

\bibitem[HLK{\etalchar{+}}08]{HoLKMEK08}
Tracey Ho, Ben Leong, Ralf Koetter, Muriel M{\'e}dard, Michelle Effros, and
  David~R. Karger.
\newblock Byzantine modification detection in multicast networks with random
  network coding.
\newblock {\em IEEE Transactions on Information Theory}, 54(6):2798--2803,
  2008.

\bibitem[Jia06]{Jia06}
Anxiao Jiang.
\newblock Network coding for joint storage and transmission with minimum cost.
\newblock In {\em ISIT~06: Proceedings of the 2006 IEEE International Symposium
  on Information Theory}, pages 1359--1363. IEEE, 2006.

\bibitem[JLK{\etalchar{+}}08]{JaggiLKHKME08}
Sidharth Jaggi, Michael Langberg, Sachin Katti, Tracey Ho, Dina Katabi, Muriel
  M{\'e}dard, and Michelle Effros.
\newblock Resilient network coding in the presence of {B}yzantine adversaries.
\newblock {\em IEEE Trans. Inf. Theory}, 2008.

\bibitem[JSC{\etalchar{+}}05]{JaggiSCEEJT05}
Sidharth Jaggi, Peter Sanders, Philip~A. Chou, Michelle Effros, Sebastian
  Egner, Kamal Jain, and Ludo M. G.~M. Tolhuizen.
\newblock Polynomial time algorithms for multicast network code construction.
\newblock {\em IEEE Transactions on Information Theory}, 51(6):1973--1982,
  2005.

\bibitem[KFM04]{KrohnFM04}
Maxwell~N. Krohn, Michael~J. Freedman, and David Mazi{\`e}res.
\newblock On-the-fly verification of rateless erasure codes for efficient
  content distribution.
\newblock In {\em S\&P~'00: Proceedings of the 2000 IEEE Symposium on Security
  and Privacy}, pages 226--240. IEEE Computer Society, 2004.

\bibitem[KM03]{KoetterM03}
Ralf Koetter and Muriel M{\'e}dard.
\newblock An algebraic approach to network coding.
\newblock {\em IEEE/ACM Transactions on Networking}, 11(5):782--795, 2003.

\bibitem[KMB10]{KimMB10}
MinJi Kim, Muriel M{\'e}dard, and Jo‹o Barros.
\newblock A multi-hop multi-source algebraic watchdog.
\newblock {\em CoRR}, 2010.

\bibitem[KTT09]{KosutTT09}
Oliver Kosut, Lang Tong, and David Tse.
\newblock Nonlinear network coding is necessary to combat general byzantine
  attacks.
\newblock Allerton, 2009.

\bibitem[LAV10]{LiangAV10}
Guanfeng Liang, Rachit Agarwal, and Nitin Vaidya.
\newblock When watchdog meets coding.
\newblock INFOCOM, 2010.

\bibitem[LM10]{LeM10}
Anh Le and Athina Markopoulou.
\newblock Locating byzantine attackers in intra-session network coding using
  spacemac.
\newblock {\em NetCod}, 2010.

\bibitem[LMK05]{LunMK05}
Desmond~S. Lun, Muriel M\'{e}dard, and Ralf Koetter.
\newblock Efficient operation of wireless packet networks using network coding.
\newblock In {\em IWCT~'05: Proceedings of the 2005: International Workshop on
  Convergent Technologies}, 2005.

\bibitem[LYC03]{LiYC03}
Shuo-Yen~Robert Li, Raymond~W. Yeung, and Ning Cai.
\newblock Linear network coding.
\newblock {\em IEEE Trans. Inf. Theory}, 49(2):371--381, February 2003.

\bibitem[Mer89]{Mer89}
Ralph~C. Merkle.
\newblock A certified digital signature.
\newblock In {\em CRYPTO '89: Proceedings of the 9th Annual International
  Cryptology Conference}, pages 218--238, New York, NY, USA, 1989.
  Springer-Verlag New York, Inc.

\bibitem[NIS]{FIPS-186-3}
{FIPS PUB 186-3: Digital Signature Standard (DSS)}.
\newblock National Institute of Standards and Technology,
  \url{http://csrc.nist.gov/groups/ST/toolkit/digital_signatures.html}.

\bibitem[NS08]{NarmawalaS08}
Zunnun Narmawala and Sanjay Srivastava.
\newblock Survey of applications of network coding in wired and wireless
  networks.
\newblock In {\em NCC~'08: Proceedings of the 14th Annual National Conference
  on Communications}, 2008.

\bibitem[WNE00]{WieselthierNE00}
Jeffrey~E. Wieselthier, Gam~D. Nguyen, and Anthony Ephremides.
\newblock On the construction of energy-efficient broadcast and multicast trees
  in wireless networks.
\newblock In {\em INFOCOM~'00: Proceedings of the 19th Annual IEEE
  International Conference on Computer Communications}, pages 585--594. IEEE,
  2000.

\bibitem[WVNK10]{WanVNK10}
Qiyan Wan, Long Vu, Klara Nahrstedt, and Himanshu Khurana.
\newblock Identifying malicious nodes in network-coding- based peer-to-peer
  streaming networks.
\newblock {\em IEEE INFOCOM}, 2010.

\bibitem[YSJL10]{YaoSJL10}
Hongyi Yao, Danilo Silva, Sidharth Jaggi, and Michael Langberg.
\newblock Network codes resilient to jamming and eavesdropping.
\newblock {\em CoRR}, 2010.

\bibitem[YWRG08]{YuWRG08}
Zhen Yu, Yawen Wei, Bhuvaneswari Ramkumar, and Yong Guan.
\newblock An efficient signature-based scheme for securing network coding
  against pollution attacks.
\newblock In {\em INFOCOM}, 2008.

\bibitem[ZKMH07]{ZhaoKMH07}
Fang Zhao, Ton Kalker, Muriel M\'{e}dard, and Keesook~J. Han.
\newblock Signatures for content distribution with network coding.
\newblock In {\em ISIT}, 2007.

\end{thebibliography}

\end{document}